\documentclass[sigconf]{acmart}

\usepackage[english]{babel}
\usepackage{blindtext}

\renewcommand\footnotetextcopyrightpermission[1]{} 
\setcopyright{none}

\usepackage{algorithm}
\usepackage{algorithmic}

\usepackage{tikz}
\usetikzlibrary{calc}
\usetikzlibrary{patterns}
\tikzstyle{mybox} = [draw=black, fill=white,  thick,
    rectangle, inner sep=10pt, inner ysep=20pt]
\tikzstyle{mybox} = [draw=black, fill=white,  thick,
    rectangle, inner sep=2pt, inner ysep=2pt]
\usetikzlibrary{shapes,snakes}

\usepackage{caption}
\usepackage{subcaption}

\settopmatter{printacmref=false, printccs=false, printfolios=true}

\acmDOI{}

\acmISBN{}


\acmPrice{}

\begin{document}
\title{On distributed algorithms for minimum dominating set problem and beyond}



 \author{Sharareh Alipour}
 \affiliation{%
   \institution{School of computer science, Institute for Research in Fundamental Sciences (IPM)}
}
 \email{alipour@ipm.ir}

  \author{Mohammadhadi Salari}
 \affiliation{%
   \institution{Simon Fraser University }
}
 \email{msalari@sfu.ca }
\renewcommand{\shortauthors}{Alipour and Salari}
\renewcommand{\shorttitle}{Distributed algorithms for MDS problem}

\begin{abstract}
 In this paper, we study the minimum dominating set (MDS) problem and the minimum total dominating set MTDS) problem which have many applications in real world. We propose a new idea to compute approximate MDS and MTDS. Next, we give an upper bound on the size of MDS of a graph. We also present a distributed randomized algorithm that produces a (total) dominating subset of a given graph whose expected size equals the upper bound.
Next, we give fast distributed algorithms for computing approximated solutions for the MDS and MTDS problems using our theoretical results. 

The MDS problem arises in diverse areas, for example in social networks, wireless networks, robotics, and etc. Most often, we need to compute MDS in a distributed or parallel model. So we implement our algorithm on massive networks and compare our results with the state of the art algorithms to show the efficiency of our proposed algorithms in practice.
We also show how to extend our idea to propose algorithms for solving $k$-dominating set problem and set cover problem. Our algorithms can also handle the case where the network is dynamic or in the case where we have constraints in choosing the elements of MDS.

\end{abstract}

\maketitle

\section{Introduction}
This paper deals with fast distributed algorithms to compute dominating set and total dominating set of graphs. 
Given a graph $G=(V,E)$ with the vertex set $V$ and the edge set $E$, we show the set of adjacent vertices to a vertex $v$, neighbors of $v$, by $N(v)$.
A set $S \subseteq V$ is a dominating set of $G$ if each node $v \in V$ is either in $S$ or has a neighbor in $S$. Also, $S \subseteq V$ is a total dominating set of $G$ if each node $v\in V$ has a neighbor in $S$. Let $\gamma(G)$ and $\gamma_t(G)$ be the size of a minimum dominating set (MDS) and a minimum total dominating set (MTDS) of a graph $G$ without isolated vertex, respectively. It is easy to prove that 
$$\gamma(G)\leq \gamma_t(G) \leq 2\gamma(G).$$

An extension of MDS problem is minimum $k$-distance dominating set problem where the goal is to choose a subset $S\subseteq V$ with minimum cardinality such that for every vertex 
$v\in V\backslash S$, there is a vertex $u\in S$ such that there is a path between them of length at most $k$. The minimum total $k$-distance dominating set is defined similarly.

Also, a subset of vertices such that each edge of the graph is incident to at least one vertex of the subset is a vertex cover. Minimum vertex cover (MVC) is a vertex cover having the smallest possible number of vertices for a given graph. The size of MVC is shown by  $\beta(G)$.

An interesting problem is to compute the minimum dominating set and the minimum vertex cover in distributed model.
In a distributed model the network is abstracted as a simple $n$-node undirected graph $G = (V,E)$. There is one processor on each graph node $v \in V$, with a unique $\Theta(log n)$-bit identifier $ID(v)$, who initially knows only its neighbors in $G$. Communication happens in synchronous rounds. Per round, each node can send one, possibly different, $O(\log n)$-bit message to each of its neighbors. At the end, each node should know its own part of the output. For instance, when computing the dominating set, each node knows whether it is in the dominating set or has a neighbor in the dominating set \cite{ghaf}.    

Computing minimum dominating set has many real world applications. 
Nowadays online social networks are growing exponentially and they have important effect on our daily life. 
To influence the network participants a key feature in a social network is the ability to communicate quickly within the network. For example, in an emergency situation, we may need to be able to reach to all network nodes, but only a small number of individuals in the network can be contacted directly due to the time or other constraints. However, if all nodes from the network are connected to at least one such individual who can be contacted directly (or is one of those individuals) then the emergency message can be quickly sent to all network participants. In this scenario the goal is to choose the minimum number of such nodes. A challenge is that each node knows its rule instantly.

Also in wireless networks consider the scenario where in order to maximize survivability, the battery power can be conserved by having the minimum possible active sensors, especially for sensors with wide overlapping fields of view. So, we need to find a minimum subset of sensors that need to remain active in order to provide a desirable level of coverage. This scenario is presented in \cite{sol}. 
As another scenario, consider a group of mobile robots each with a wireless access point. The goal of the robots is to maximally cover an area with the wireless network. As the robots are traveling between waypoints, though, it is highly likely that there will be a large amount of overlap in the coverage. Therefore, in order to save power, the robots might want to choose a maximum subset of robots that can lower their transmit power while still retaining coverage\cite{sol}. Now suppose that the robots are chosen but suddenly some of them need to be repaired, so the solution should be changed accordingly.
The challenge in each of these scenarios is for the agents to collectively find the solution without relying on centralization of computation. Centralization is infeasible either due to lack of resources (i.e., no single agent has powerful enough hardware to solve the global problem) or due to lack of time (i.e., centralizing the problem will take at least a linear number of messaging rounds).
Another challenge is that how to solve the problem when some of the inputs are changed or extra constraints are added, for example suppose that some areas should be covered by a specific set of robots or nodes in the network. 

The Art Gallery Problem (AGP), a well known problem in computational geometry community, is another problem that is related to MDS problem. 
There are practical problems that turn out to be related to AGP. Some of these are straightforward, such as guarding a shop with security cameras, or illuminating an environment with few lights. Also the AGP arises in multiagent systems. For example, many robotics, sensor network, wireless networking, and surveillance problems can be mapped to variants of the art gallery problem \cite{sol}. The nature of these problems leads us to apply multiagent paradigm, each guard is considered as an agent.
So, a new research area is considering the AGP in multiagent paradigm and in distributed model.
The AGP is equivalent to the Coverage Problem
in the context of wireless sensor networks, wireless ad-hoc networks, and
wireless sensor ad-hoc networks \cite{meg}.

There are also many other problems in networking that can be modeled as the minimum dominating set problem. For example,
in \cite{wan}, the problem of node placement
for ensuring complete coverage in a long belt with minimum number of nodes scenario is studied. Each node is assumed to be able to cover a disk area centered at itself with a fixed radius.
In \cite{cha} grid coverage for surveillance and target location in distributed sensor networks is studied.

In social networks the minimum $k$-distance dominating set can be considered as social recommenders. 
The close nodes influence each other and they have the same preferences in a network.
Suppose that we want to give recommendation on a special product (e.g. which movie to watch) to each node of network but we can't reach all of them because of the time constraint and advertising cost. We may choose minimum number of nodes such that they dominate all other nodes within distance $k$ from them. We give a recommendation to each of the selected nodes and then they spread it in the network. This is equal to solving $k$-dominating set problem. For more on social recommendation see \cite{avn}. 

\subsection{Recent results and related works}
\subsection*{Sequential model}
Finding a minimum dominating set is NP-complete \cite{karp}, even for planar graphs of maximum degree $3$ \cite{gar},
and cannot be approximated for general graphs with a constant ratio under the assumption $P\neq NP$ \cite{raz}.
An $O(log n)$-approximation factor can be found by using a simple greedy algorithm. Moreover, negative results have been proved for the approximation of MDS even when limited to the power law graphs \cite{gas}.
 A number of works have been done on exact algorithms for MDS, which mainly focus on improving the upper bound of running time. State of the art exact algorithms for MDS are based on the branch and reduce paradigm and can achieve a run time of $O({1.4969}^n)$ \cite{van}. Fixed parameterized algorithms have allowed to obtain better complexity results \cite{kar}. The main focus of such algorithms is on theoretical aspects.

In practice, these theoretical algorithms are not applicable specially in massive networks because of time and space constraints. So we need to use heuristic algorithms  to obtain solutions. See \cite{san} for a  comparison among several greedy heuristics for MDS. 

In sequential model heuristic search methods such as genetic algorithm \cite{her} and ant colony optimization \cite{pot11,pot13} have been developed to solve MDS. Also Hyper metaheuristic algorithms combine different heuristic search algorithms and preprocessing techniques to obtain better performance \cite{pot13,sach,blum,gen,abu}. 
These algorithms were tested on standard benchmarks with up to thousand vertices. The configuration checking (CC) strategy \cite{cai11} has been applied to MDS and led to two local search algorithms. Wang et al. proposed the CC2FS algorithm for both unweighted and weighted MDS \cite{w}, and obtained better solutions than ACO-PP-LS  \cite{pot13} on standard benchmarks. Afterwards, another CC-based local search named FastMWDS was proposed, which significantly improved CC2FS on weighted massive graphs \cite{wang18}. Chalupa proposed an order-based randomized local search named RLSo \cite{chal}, and achieved better results than ACO-LS and ACO-PP-LS \cite{pot11,pot13} on standard benchmarks of unit disk graphs as well as some massive graphs. Fan et. al. designed a local search algorithm named ScBppw \cite{fan}, based on two ideas including score checking and probabilistic random walk. Recently an efficient local search algorithm for MDS is proposed in \cite{cai20}. The algorithm named FastDS is evaluated on some standard benchmarks. FastDS obtains the best performance for almost all benchmarks, and obtains better solutions than previous algorithms on massive graphs in their experiments.
A recent study for the $k$-dominating set problem can be found in \cite{min}. They proposed a heuristic algorithm that can handle real-world instances with up to $17$ million vertices and $33$ million edges. They stated that this is the first time such large graphs are solved for the minimum $k$-dominating set problem. They compared their proposed algorithm with the other best known algorithms for this problem.

\subsection*{Distributed model}

The centralized algorithms for the MDS and the MTDS problems have been studied well in the literature. However, there is little known about distributed algorithms for these problems. 
Most of the distributed algorithms proposed to solve the dominating set problem lack giving bounds on both runtime and solution quality. Most of the time the emphasis in the wireless networking community and social networks is on algorithms with a constant number of communication rounds. For example, Ruan et.al. in \cite{ruan} proposed a one-step greedy approximation algorithm for the minimum connected dominating set problem (MCDS), with an approximation factor  that is a function of $\Delta(G)$,  where $\Delta(G)$ is the maximum degree of the graph $G$. Kuhn and Wattenhofer \cite{kuhn} proposed a more general result, their approximation factor is variable and a function of the number of communication rounds. However, this algorithm also depends on $\Delta$. Huang et.al. in \cite{chu}, by increasing the length of messages in each communication round, gave a $12$-approximation algorithm for MCDS problem. For more theoretical results on distributed algorithms for MDS problem see \cite{akh}.

 In \cite{hike}, it has been shown that for any $\epsilon>0$ there is no deterministic local algorithm that finds a $(7-\epsilon)$-
approximation of a minimum dominating set for planar graphs. However, there exist an algorithm with approximation factor of $52$ for computing a MDS in planar graphs \cite{and,chris} in local model and an algorithm with approximation factor of $636$ for anonymous networks \cite{and,waw}. In \cite{ali1}, they improved the approximation factor in anonymous networks to $18$ in planar graphs without $4$-cycles. For more information on local algorithms see \cite{suo}.
Then in \cite{ali2}  it has been proved that the approximation factor of \cite{ali1} for triangle-free planar graphs is 32 and 16 for MDS and MTDS.
They have also presented a modified version of the algorithm presented in \cite{ali1} and implemented their algorithm on real data sets.

Sultanik  et al. \cite{sol} introduced a distributed algorithm for the art gallery and dominating set problem that is guaranteed to run in a number communication rounds on the order of the diameter of the visibility graph. They show through empirical analysis that the algorithm will produce solutions within a constant factor of optimal with high probability.
The version of AGP that they studied is equivalent to computing MDS of the visibility graph of polygons.

\subsection{Our results}

In this paper, first we present our theoretical results about computing MDS and MTDS of graphs. We give upper bounds for MDS and MTDS and fast distributed randomized algorithms to achieve these bounds.
This upper bound is similar to Caro-Wei bound for maximum independent set of graphs (see \cite{caro} and \cite{wei}).
 Next we propose our algorithms for computing the minimum dominating set of graphs using our theoretical results. 
 


In the distributed model the first algorithm runs in constant number of rounds and the communication rounds of the second one depends on the distributed algorithms that are used to find a minimum vertex cover.  
For example in \cite{val} a $3$-approximation for $\beta(G)$ 
is given that runs in $2\Delta+1$ rounds. In \cite{matt} a $2$-approximation algorithm for $\beta(G)$ is given that runs in $(\Delta+1)^2$ rounds.

Our algorithms can be run in dynamic model where the nodes are added or deleted constantly as well. We can handle the case where each node should be dominated by a special set of nodes.
Also we show how to to extend the algorithms to solve $k$-distance dominating set problem, and set cover problem.

\section{Theoretical Results}
In this section, we present our main idea for computing minimum dominating set (MDS) and minimum total dominating set (MTDS) of a given graph $G=(V,E)$. Then we present an upper bound for MDS and MTDS and we give a distributed randomized algorithm for computing this upper bound.
\subsection{Main idea}
First we present our idea for computing MTDS.
Here, we assume all considered graphs have no isolated vertex.
For a given graph $G$, we construct a graph $G'$ with the same set of vertices as in $G$. For each vertex $v$,  we choose two of its neighbors arbitrarily and add an edge between them in $G'$, if $v$ is of degree one, we add a loop edge on its neighbor (See Fig \ref{G1} and \ref{G2}). We call this edge the corresponding edge of $v$ in $G'$ and denote it by $e_v$. Note that if the graph $G$ has a cycle of length $4$, with vertices $a,b,c,d$ then the edge $bd$ can be the corresponding edge of both $a$ and $c$ in $G'$. So, $|E(G')|\leq n$ ($|E(G')|$ denotes the size of set $E(G')$ and $n$ is the size of the set $V(G)$).
 Obviously the construction of $G'$ can be done in one round in the distributed model.
 Let $\alpha(G)$ and $\beta(G)$ be the size of maximum independent set and the size of minimum vertex cover of $G$, respectively. 
\begin{lemma}
\label{1}
$\gamma_t(G)\leq n-\alpha(G')=\beta(G').$
\end{lemma}
\begin{proof}
Suppose that $D$ is a maximum independent set of $G'$, so $|D|=\alpha(G')$.
Now we show that $V\backslash D$ is a total dominating set for $G$. For each vertex $v$, we choose two of its neighbors, for example $u$ and $w$, and add an edge between them in $G'$. Since there is an edge between $u$ and $w$, so at most one of them can be in $D$ which means at least one of them is in $V\backslash D$.
The same argument applies when an edge is a loop. Thus for each vertex $v$ at least one of its neighbors in $G$ is in $V\backslash D$, so $V \backslash D$ is a total dominating set for $G$. The size of $|V\backslash D|$ equals $n-\alpha(G')$  and we have 
$\gamma_t(G)\leq n-\alpha(G')=\beta(G')$.
\end{proof}
Note that $G'$ is not unique, so there is a set $A$ of graphs $G'_i$'s such that they can be constructed as we explained earlier.
Now we present our main theorem.
\begin{theorem}
\label{main}
Let $G'_{min}\in A$ be such that $$n-\alpha(G'_{min})=min_{G'_i\in A}{n-\alpha(G'_i)},$$ then $$\gamma_t(G)=n-\alpha(G'_{min})=\beta(G'_{min}).$$
\end{theorem}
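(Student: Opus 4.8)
The plan is to establish the theorem by combining Lemma~\ref{1} (which already gives the upper bound $\gamma_t(G) \leq n - \alpha(G'_{min})$ for \emph{every} $G'_i \in A$, and in particular for the minimizer) with a matching lower bound. Since Lemma~\ref{1} supplies ``$\leq$'' for free, the entire content of the theorem reduces to proving the reverse inequality $\gamma_t(G) \geq n - \alpha(G'_{min})$. The strategy is to take an \emph{arbitrary} minimum total dominating set $T$ of $G$ with $|T| = \gamma_t(G)$, and from it construct a specific graph $G'_T \in A$ together with an independent set of $G'_T$ of size exactly $n - |T|$; this would show $\alpha(G'_T) \geq n - \gamma_t(G)$, hence $n - \alpha(G'_T) \leq \gamma_t(G)$, and since $G'_{min}$ minimizes $n - \alpha(G'_i)$ over $A$ we get $n - \alpha(G'_{min}) \leq n - \alpha(G'_T) \leq \gamma_t(G)$, closing the loop.

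\smallskip

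\noindent\textbf{The key steps, in order.} First I would fix a minimum total dominating set $T$ and consider $V \setminus T$; the goal is to realize $V \setminus T$ as an independent set in a suitably constructed $G'_T$. Second, for each vertex $v \in V$, I must choose its two designated neighbors (the endpoints of the corresponding edge $e_v$) so that \emph{both} endpoints land in $T$. This is exactly where the total-domination hypothesis is used: because $T$ is a total dominating set, every vertex $v$ — including vertices already in $T$ — has at least one neighbor in $T$. To build an edge $e_v$ with both endpoints in $T$, I need $v$ to have two neighbors in $T$; if $v$ has only one neighbor $u \in T$, the construction rule of the excerpt permits a loop on $u$ when $v$ has degree one, but in general I would pick the second endpoint among $T$ as well, relying on the fact that $u$ itself is totally dominated and so has a neighbor, allowing the edge to be routed inside $T$. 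Third, once every corresponding edge $e_v$ has both endpoints inside $T$, no edge of $G'_T$ has an endpoint in $V \setminus T$, so $V \setminus T$ is independent in $G'_T$, giving $\alpha(G'_T) \geq |V \setminus T| = n - \gamma_t(G)$. Finally I would invoke the minimality of $G'_{min}$ to conclude, and note that the equality $n - \alpha(G'_{min}) = \beta(G'_{min})$ is just the complementation identity $\alpha + \beta = n$ already recorded in Lemma~\ref{1}.

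\smallskip

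\noindent\textbf{The main obstacle} will be the second step: guaranteeing that for \emph{every} vertex $v$ one can legally select a corresponding edge $e_v$ whose \emph{both} endpoints lie in $T$, while respecting the construction rule (two neighbors of $v$, or a loop only when $\deg v = 1$). The delicate cases are vertices $v$ with exactly one neighbor in $T$ but degree greater than one (so a loop is not permitted, yet the second designated neighbor might be forced outside $T$), and the bookkeeping around degree-one vertices and loops. I would handle this by arguing that since $T$ totally dominates $G$, one can always reroute the second endpoint to a neighbor of the first $T$-endpoint, so that the realized edge set stays within $T$; making this rerouting precise and checking it still yields a valid member of $A$ is the crux. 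I expect the degree-one and loop cases to require a short separate verification, but the heart of the argument is simply that $V \setminus T$ meets no chosen edge precisely when total domination lets us keep every $e_v$ inside $T$.
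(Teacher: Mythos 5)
Your overall architecture is the same as the paper's (Lemma~\ref{1} supplies the upper bound; a witness graph built from a minimum total dominating set $T$ supplies the lower bound), but your key step contains a genuine gap: you insist that \emph{both} endpoints of every corresponding edge $e_v$ lie in $T$, and this is in general impossible within the class $A$. The rule defining $A$ forces $e_v$ to join two \emph{neighbors of $v$} (or to be a loop on the unique neighbor when $\deg v=1$), so a vertex $v$ with $\deg v\geq 2$ and exactly one neighbor in $T$ admits no legal $e_v$ inside $T$. Concretely, in the path $a\,b\,c\,d$ with the minimum total dominating set $T=\{b,c\}$, the only legal choice for $e_b$ is the edge $ac$, which has the endpoint $a\notin T$. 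Your proposed repair --- rerouting the second endpoint to a neighbor of the first $T$-endpoint --- produces an edge whose endpoints need not both be neighbors of $v$, so the resulting graph leaves $A$ and the minimality comparison against $G'_{min}$ no longer applies.

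The obstacle dissolves once you notice your requirement is strictly stronger than what you need: $V\setminus T$ is independent in $G'_T$ if and only if no edge has \emph{both} endpoints in $V\setminus T$, i.e.\ if and only if $T$ is a vertex cover of $G'_T$; edges with exactly one endpoint in $T$ are harmless. One endpoint in $T$ is exactly what total domination gives for free: pick $u\in T\cap N(v)$ and let the second endpoint be an arbitrary other neighbor $w$ of $v$, with a loop on $u$ when $\deg v=1$ (in that case $u\in T$ automatically, since $u$ is $v$'s only neighbor). This one-sided condition is precisely the paper's argument: it shows the minimum total dominating set $S$ is a vertex cover of the constructed $G'$, giving $\beta(G')\leq |S|=\gamma_t(G)$, and combines this with Lemma~\ref{1} to conclude $\beta(G'_{min})=\gamma_t(G)$. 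With the covering condition weakened in this way, your remaining steps (independence of $V\setminus T$, the chain $n-\alpha(G'_{min})\leq n-\alpha(G'_T)\leq \gamma_t(G)$, and the identity $n-\alpha=\beta$) go through as written.
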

\begin{proof}
By Lemma \ref{1} we have $\gamma_t(G)\leq \beta(G'_{min})$. To show that $\beta(G'_{min})\leq \gamma_t(G)$,
 it is enough to construct a graph $G'\in A$ such that $\beta(G')=\gamma_t(G)$.
Let $S$ be a total dominating set of vertices of cardinality $\gamma_t(G)$.  
For each vertex  $v\in V$, there is at least one vertex $u\in S$ such that $u$ and $v$ are adjacent. If $d(v)=1$ then, we put a loop on its neighbor, $u$, otherwise $v$ has at least another neighbor, for example $w$.
We put an edge between $u$ and $w$. Now we have our graph $G'\in A$.
Since every edge of $G'$ has at least one of its endpoints in $S$, hence $S$ is a vertex cover for $G'$. On the other hand, any vertex cover for $G'$ is a total dominating set for $G$, since for any vertex $v$ of $G$ there is an edge of $G'$ whose endpoints are adjacent to $v$, hence it is dominated by a vertex cover of $G'$. Therefore $S$ is a minimum dominating set for $G'$ and we have $\beta(G')=\gamma_t(G)$  (See Fig \ref{G3}).

\end{proof}

\begin{figure}[H]
  \centering
  \caption{Graph $G$ with $\gamma_t(G)=3$ and the constructed graphs from $G$.}
  
\begin{subfigure}{1.6in}
\includegraphics[width=\textwidth]{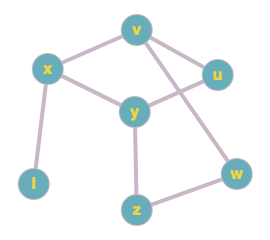}
\caption{Graph $G$ with a MDTS of size 3. For example $\{v,x,y\}$ is a MTDS for $G$.}
\label{G1}
\end{subfigure}
  
\begin{subfigure}{1.6in}
\includegraphics[width=\textwidth]{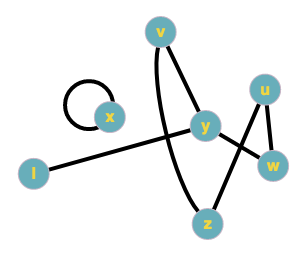}
\caption{Graph $G'$ which is constructed from $G$. Here, $e_v=uw$, $e_u= vy$, $e_y= uz$, $e_w= vz$, $e_z= wy$, $e_x= vy$ and $e_i= xx$, where by $xx$ we mean a loop on $x$ and $\beta(G')=4$. For example $\{x,y,z,w\}$ is a minimum vertex cover for $G'$ and a dominating set for $G$.}
\label{G2}
\end{subfigure}
  
\begin{subfigure}{1.6in}
\includegraphics[width=\textwidth]{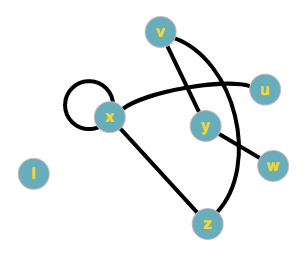}
\caption{Here the Graph $G'_{min}$ is also constructed from $G$. Here, $e_v= ux$, $e_u= vy$, $e_x= vy$, $e_y= xz$, $e_w= vz$, $e_z=wy$ and $e_i=xx$ and $\beta(G'_{min})=3$. For example $\{x,v,y\}$ is a minimum vertex cover for $G'_{min}$ and a minimum total dominating set for $G$.}
\label{G3}
\end{subfigure}
\label{G}
\end{figure}

Now we give a similar argument for computing MDS. We construct a graph $G''$ from $G$ as follows. The vertex set of $G''$ is the same as $G$. For each vertex $v\in V$ we choose two vertices from $N(v)\bigcup v$  and add an edge between those two selected vertices. This way we have a graph $G''$.
For MDS we do not have the assumption that there is no isolated vertices. So if $v$ is isolated we simply put a loop on $v$.
\begin{lemma}
\label{11}
$\gamma(G)\leq n-\alpha(G'')=\beta(G'').$
\end{lemma}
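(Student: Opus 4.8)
The plan is to mirror the proof of Lemma~\ref{1}, adapting it to the modified construction of $G''$ in which each corresponding edge has both endpoints in $N(v)\cup\{v\}$ rather than in $N(v)$ alone. The equality $n-\alpha(G'')=\beta(G'')$ is immediate from the Gallai identity: in any graph the complement of a maximum independent set is a minimum vertex cover, so $\alpha(G'')+\beta(G'')=n$. Hence the whole content of the lemma lies in establishing the inequality $\gamma(G)\le n-\alpha(G'')$.

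To prove the inequality I would take a maximum independent set $D$ of $G''$, so that $|D|=\alpha(G'')$, and show that $V\setminus D$ is a dominating set of $G$. Fix an arbitrary vertex $v\in V$. By construction $G''$ contains an edge (or loop) $e_v$ whose two endpoints $u,w$ both belong to $N(v)\cup\{v\}$. Since $D$ is independent and $uw$ is an edge of $G''$, at most one of $u,w$ lies in $D$, so at least one of them, say $x$, lies in $V\setminus D$.

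The key point, and the only place the construction differs from Lemma~\ref{1}, is that $x\in N(v)\cup\{v\}$. Either $x=v$, in which case $v\in V\setminus D$ and $v$ dominates itself, or $x\in N(v)$, in which case $v$ has the neighbor $x\in V\setminus D$. In both cases $v$ is dominated by $V\setminus D$, which is precisely the (non-total) domination condition; the inclusion of $v$ itself among the candidate endpoints is exactly what relaxes the total-domination conclusion of Lemma~\ref{1} to ordinary domination. Since $v$ was arbitrary, $V\setminus D$ is a dominating set of $G$, and therefore $\gamma(G)\le|V\setminus D|=n-\alpha(G'')=\beta(G'')$.

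There is no genuine obstacle here; the argument is a routine variant of Lemma~\ref{1}. The only point worth double-checking is the degenerate case: if $v$ is isolated we placed a loop on $v$, so $v$ cannot belong to any independent set $D$, whence $v\in V\setminus D$ and $v$ dominates itself. This confirms that the argument covers isolated vertices, consistent with the remark that, unlike the MTDS setting, we do not assume $G$ is free of isolated vertices.
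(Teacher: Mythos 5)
Your proof is correct and takes exactly the route the paper intends: the paper omits the proof of Lemma~\ref{11}, stating it is ``similar to the proof of Lemma~\ref{1},'' and your argument is precisely that adaptation, with the endpoints of $e_v$ drawn from $N(v)\cup\{v\}$ so that the undominated-vertex conclusion weakens from total domination to ordinary domination. Your explicit handling of the loop on an isolated vertex (which forces $v\in V\setminus D$) is a detail the paper leaves implicit, and it is handled correctly.
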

Since $G''$ is not unique, so there is a set $A$ of graphs $G''_i$'s such that they can be constructed as we explained earlier.
We have the following theorem.
\begin{theorem}
\label{main1}
Let $G''_{min}\in A$ be such that  $$n-\alpha(G''_{min})=min_{G''_i\in A}{n-\alpha(G''_i)},$$ then $$\gamma(G)=n-\alpha(G''_{min})=\beta(G''_{min}).$$

\end{theorem}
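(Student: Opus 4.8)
The plan is to mirror the proof of Theorem \ref{main}, exploiting the symmetry between the total and ordinary dominating set constructions. First I would establish the easy inequality $\gamma(G)\leq\beta(G''_{min})$: Lemma \ref{11} already gives $\gamma(G)\leq n-\alpha(G''_i)=\beta(G''_i)$ for every admissible construction $G''_i\in A$, so taking the minimum over $A$ immediately yields $\gamma(G)\leq n-\alpha(G''_{min})=\beta(G''_{min})$.

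For the reverse direction it suffices to exhibit a single $G''\in A$ with $\beta(G'')=\gamma(G)$, since then $\beta(G''_{min})\leq\beta(G'')=\gamma(G)$ by the minimality defining $G''_{min}$. To build such a $G''$, I would start from a minimum dominating set $S$ of $G$ with $|S|=\gamma(G)$ and, for each vertex $v$, choose the corresponding edge $e_v$ so that at least one of its endpoints lies in $S$. Because $S$ dominates $G$, either $v\in S$ or $v$ has a neighbor $u\in S$; in the first case I pair $v$ with any vertex of $N(v)\cup\{v\}$ (a loop on $v$ if $v$ is isolated), and in the second case I pair $u$ with $v$. In every case $e_v$ has an endpoint in $S$, so $S$ is a vertex cover of the resulting $G''$.

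It then remains to check that this $G''$ satisfies $\beta(G'')=\gamma(G)$. On the one hand, $S$ is a vertex cover of size $\gamma(G)$. On the other hand, every vertex cover $C$ of $G''$ is a dominating set of $G$: since both endpoints of $e_v$ were chosen from $N(v)\cup\{v\}$, at least one of them lies in $C$, and that vertex either equals $v$ or is a neighbor of $v$, so $v$ is dominated. Hence $|C|\geq\gamma(G)$, which shows $S$ is a minimum vertex cover and $\beta(G'')=\gamma(G)$. Chaining the two inequalities gives $\gamma(G)=\beta(G''_{min})=n-\alpha(G''_{min})$.

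The one place demanding care, and the only substantive difference from the MTDS argument of Theorem \ref{main}, is the presence of $v$ itself in the candidate set $N(v)\cup\{v\}$. I expect the main obstacle to be verifying that this self-inclusion does not break the correspondence between vertex covers of $G''$ and dominating sets of $G$ in the two degenerate cases: when $e_v$ is incident to $v$ (so that covering $e_v$ may place $v$ itself into the dominating set, which is still legitimate for ordinary domination) and when $v$ is isolated (so that $e_v$ is a loop forcing $v$ into every vertex cover). Both cases resolve precisely because ordinary domination, unlike total domination, permits a vertex to dominate itself.
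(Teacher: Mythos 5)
Your proposal is correct and is essentially the proof the paper intends: the paper explicitly omits the proof of Theorem \ref{main1}, stating it is analogous to Lemma \ref{1} and Theorem \ref{main}, and your argument is precisely that adaptation --- the easy direction via Lemma \ref{11} plus minimality, and the reverse direction by constructing $G''$ from a minimum dominating set $S$ so that every corresponding edge $e_v$ has an endpoint in $S$. Your closing observation, that choosing endpoints from $N(v)\cup\{v\}$ is harmless exactly because ordinary domination allows self-domination, correctly identifies the only point where the MDS case differs from the MTDS case.
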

We omit the proofs because they are similar to the proofs of Lemma \ref{1} and Theorem \ref{main}. As we said earlier $G'$ and $G''$ are not unique. So, our aim is to give algorithms for constructing $G'$ and $G''$ such that $\beta(G')$ and $\beta(G'')$ be a good approximation of $\beta(G'_{min})$ and $\beta(G''_{min})$.

As a result of Lemma \ref{1} and Lemma \ref{11} in the following subsection we present an upper bound for MDS and MTDS.

\subsection{An upper bound for MDS and MTDS and a distributed algorithm for computing that bound}

To compute an upper bound for MTDS of a graph $G$, we construct a graph $G'$ as before. By Lemma \ref{1}, $\gamma_t(G)\leq n-\alpha(G')$. 
Caro \cite{caro} and Wei \cite{wei} showed  that in a given graph $G$, $\sum^n_{i=1}\frac{1}{1+d_i}\leq \alpha(G)$, where $d_i$ is the degree of vertex $i$.
So, we have the following theorem.
\begin{theorem}
By Lemma \ref{1} and Lemma \ref{11} if we construct the graphs $G'$ and $G''$ arbitrarily then,
 $$\gamma_t(G)\leq n-\alpha(G')\leq n-\sum^n_{i=1}\frac{1}{1+d_i'}.$$ Where $d_i'$ is the degree of $v_i$ in $G'$. 
 And 
  $$\gamma(G)\leq n-\alpha(G'')\leq n-\sum^n_{i=1}\frac{1}{1+d_i^{''}}.$$ Where $d_i^{''}$ is the degree of $v_i$ in $G''$.
\end{theorem}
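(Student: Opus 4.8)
The plan is to concatenate the two lemmas already proved with the Caro--Wei inequality quoted immediately before the statement, applied once to $G'$ and once to $G''$. The two displayed chains are structurally identical, so I would carry out the argument for the MTDS chain and then obtain the MDS chain by replacing $G'$ with $G''$ and Lemma \ref{1} with Lemma \ref{11}.

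First I would observe that Lemma \ref{1} already supplies the left inequality $\gamma_t(G)\le n-\alpha(G')$ for \emph{every} graph $G'$ built by the corresponding-edge construction, and symmetrically Lemma \ref{11} supplies $\gamma(G)\le n-\alpha(G'')$. Thus only the right inequality of each chain remains, and it is purely a statement about the independence number of $G'$ (respectively $G''$). The heart of the proof is therefore to bound $\alpha$ from below.

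Second, I would apply Caro--Wei to $G'$: with $d_i'$ the degree of $v_i$ in $G'$ one has $\sum_{i=1}^n \frac{1}{1+d_i'}\le \alpha(G')$. Multiplying by $-1$ and adding $n$ gives $n-\alpha(G')\le n-\sum_{i=1}^n \frac{1}{1+d_i'}$, and chaining this after Lemma \ref{1} produces the first displayed inequality. Applying Caro--Wei verbatim to $G''$ and chaining after Lemma \ref{11} produces the second. No further computation is needed.

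The step I expect to require the most care is the applicability of Caro--Wei to $G'$ and $G''$, because these graphs are in general \emph{not} simple: a degree-one vertex of $G$ (and, for $G''$, an isolated vertex) forces a loop in the construction, whereas the Caro--Wei bound is classically stated for loopless graphs. A looped vertex can never lie in an independent set, so counting its loop inside $d_i'$ and keeping a positive term $\frac{1}{1+d_i'}$ in the sum would over-count and can genuinely violate the inequality (for instance when $G$ is a single edge, where both constructed vertices carry loops). The clean fix is to run the probabilistic proof of Caro--Wei on $G'$: draw a uniform random vertex ordering and keep each vertex that precedes all of its neighbors; a looped vertex is adjacent to itself and is therefore never kept, contributing $0$, while each loopless $v_i$ is kept with probability exactly $\frac{1}{1+d_i'}$. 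The expected size of the resulting independent set is $\sum_{i=1}^n \frac{1}{1+d_i'}$ under the convention that looped vertices contribute $0$, which is the reading the displayed sum must take; with this interpretation the bound holds for both $G'$ and $G''$ and the argument closes.
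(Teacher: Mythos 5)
Your argument coincides with the paper's own proof: the paper gives no separate derivation for this theorem, obtaining it exactly as you do by chaining Lemma \ref{1} (resp.\ Lemma \ref{11}) with the Caro--Wei bound applied to $G'$ (resp.\ $G''$). Your extra care about loops actually goes beyond the paper, which applies Caro--Wei to $G'$ and $G''$ without comment even though the construction forces loops (at the neighbor of every degree-one vertex, and at isolated vertices for $G''$); your observation that the naive reading can fail --- e.g.\ when $G$ is a single edge, both constructed vertices carry loops, $\alpha(G')=0$, yet the unadjusted sum is positive, contradicting $\gamma_t(G)=2$ --- is correct, and your fix via the random-ordering proof with looped vertices contributing $0$ is the right convention to make the stated inequality sound.
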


 These bounds for MTDS and MDS is similar to Caro and Wei bound for independent set of graphs. 

\subsection*{A randomized distributed algorithm for computing the given upper bound for MTDS and MDS}

An independent set of expected size $\sum^n_{i=1}\frac{1}{1+d_i}$ for a graph $G$ can be found by a simple linear time randomized algorithm
that follows from an analysis of the Caro-Wei bound by Alon and Spencer in \cite{alon}. This algorithm works as follows. Every node $v$ chooses a random real value between $0$ and $1$ and adds itself to the independent set $I$ if none of its neighbors
have chosen a larger real value than $v$. Then, the probability that a node $v$ is
added to the independent set is $\frac{1}{1+d_i}$, hence by linearity of expectation, $E(I)=\sum^n_{i=1}\frac{1}{1+d_i}$.

So, our distributed algorithm to compute an upper bound for MTDS is as follows. We construct a graph $G'$ arbitrary as explained and then compute an independent set of expected size of $\sum^n_{i=1}\frac{1}{1+d'_i}$. We choose those vertices that are not in the independent set of $G'$. Which means that we compute a total dominating set of expected size of  $n-\sum^n_{i=1}\frac{1}{1+d'_i}$ in constant number of rounds. The same argument applies for the MDS. We construct a graph $G''$  and then we compute a vertex cover of expected size of $n-\sum^n_{i=1}\frac{1}{1+d^{''}_i}$.

\section{Algorithms}

In this section we present two distributed algorithms for computing a dominating set for a given graph.

\subsection{First Algorithm}

The first algorithm is the same as algorithm presented in \cite{ali2} with a small modification. In \cite{ali2}, they compute a total dominating set and since a total dominating set is also a dominating set in graphs with no isolated vertex, they consider this total dominating set as a dominating set. But in our modified version we compute a dominating set. As we said earlier the size of MTDS can be twice of the size of MDS so in practice we expect that this algorithm performs better than the algorithm in \cite{ali2}.

\begin{algorithm}
\begin{algorithmic}[1]
\label{alg1}
\caption{First distributed algorithm for computing a dominating set in a graph with given integer $m\ge 0$.}
\label{alg1}
\STATE In the first round, each node $v_i$ chooses a random number $0<r_i<1$ and computes its weight $w_i=d_i+r_i$ and sends $w_i$ to its adjacent neighbors.
\STATE In the second round, each node $v$ marks a vertex $v_i\in N(v)\bigcup v$, whose weight $w_i$  is maximum among all the other neighbors of $v$.
\FOR {$m$ rounds} 
  \STATE Let $x_i$ be the number of times that a vertex is marked by its neighbor vertices, let $w_i=x_i+r_i$
  \STATE Unmark the marked vertices.
  \STATE Each vertex $v$ marks the vertex with $v_i\in N(v)\bigcup v$ maximum $w_i$.
    \ENDFOR
\STATE The marked vertices are considered as the vertices in our dominating set for $G$.
\end{algorithmic}
\end{algorithm}

Obviously the set of marked vertices is a dominating set since each vertex marks itself or one of its neighbors.  
In the first round, $r_i$'s are generated and added to $d_i$'s. And in the next round each vertex mark a vertex with maximum $w_i$. In the next $m$ rounds each vertex marks a vertex based on $x_i$'s.
So, in a distributed network for a constant number $m$, this algorithm runs in constant number of rounds.
Note that this algorithm is the same as \cite{ali2}, except that in line 2 and line 6 of Algorithm \ref{alg1} each vertex marks a vertex in $v\bigcup N(v)$ but 
 in \cite{ali2}, each vertex $v$ marks a vertex in $N(v)$. That is why their algorithm gives a total dominating set but Our algorithm gives a dominating set.
 
 \subsection{Second Algorithm}

In the second algorithm our aim is to improve the results of Algorithm \ref{alg1} by using Theorem \ref{main1}.
First we run Algorithm \ref{alg1}. In this algorithm at the last step for each $v_i$ we know the value of $x_i$, i.e. the number of times that vertex $v_i$ is selected by its neighbors or itself.
We construct a graph $G''_1$ from $G$ as follows. The vertex set of $G''_1$ is the same as $G$. For each vertex $v_i$ in $G$ we choose two vertices  $u_i,y_i\in N(v_i)\bigcup v_i$ with maximum value of  $x_j+r_j$'s such that $x_j>0$. Then we add an edge between $u_i$ and $y_i$ in $G''_1$ and if there is only one $x_j>0$ we add a loop on $v_j$.
\begin{algorithm}[H]
\caption{Second distributed algorithm for computing dominating set in a given graph. }
\label{alg2}

\begin{algorithmic}[1] 
\STATE Run Algorithm \ref{alg1} for $m$ rounds.
\STATE Let $G''_1$ be a graph with the same vertex set as $G$.
\STATE For each vertex $v_i$ choose two vertices  $u_i,y_i\in N(v_i)\bigcup v_i$ with maximum values of  $x_j+r_j$'s such that $x_j>0$. Add an edge between $u_i$ and $y_i$ in $G''_1$ and if there is only one $x_j>0$ we add a loop on $v_j$.
\STATE Compute a vertex cover for $G''_1$ which is a dominating set for $G$.
\end{algorithmic}
\end{algorithm}

By Lemma \ref{1}, if we compute a vertex cover for $G''_1$ then the vertices in the vertex cover of $G''_1$ form a dominating set for $G$.
Obviously $G''_1$ can be constructed in constant number of rounds in distributed model.
There are well known algorithms for computing the minimum vertex cover of a graph in distributed model which according to our running time and space constraints we can use one of them.

In this section we explained two algorithms for computing a dominating set for graphs. We are not able to compute the approximation factor of these algorithms theoretically in general. Instead we implement the algorithms on real data sets and compare the results with state of the art algorithms.

\section{Experiments}

\subsection*{Data description}

In the following we present a brief description of the benchmarks from \cite{cai20}.
 
T1\footnote{http://mail.ipb.ac.rs/~rakaj/home/BenchmarkMWDSP.htm}: This data set consists of $520$ instances where each instance has two different weight functions. As in \cite{cai20} we select these original graphs where the weight of each vertex is set to $1$. There are $52$ families, each of which contains $10$ instances with the same size. 
 The instances have 50 to 1000 nodes with different number of randomly created edges but always making the graph connected (for more details see \cite{rom}).

BHOSLIB\footnote{http://networkrepository.com/bhoslib.php}: This benchmark are generated based on the RB model near the phase transition. It is known as a popular benchmark for graph theoretic problems. The order of average number of vertices in this benchmark is about $750$ and average number of edges is about $10^5$ (for more details see \cite{nr}).

SNAP\footnote{http://snap.stanford.edu/data}: This benchmark is from Stanford Large Network Dataset Collection. It is a collection of real world graphs from $10^4$ vertices to $10^7$ vertices (for more details see \cite{snap}).

DIMACS10\footnote{http://networkrepository.com/dimacs10.php}: This benchmark is from the 10th DIMACS implementation challenge, which aims to provide large challenging instances for graph theoretic problems (for more details see \cite{nr}).

Network Repository\footnote{http://networkrepository.com/}: The Network Data Repository includes massive graphs from various areas. Many of the graphs have $100$ thousands or millions of vertices. This benchmark has been widely used for graph theoretic problems including vertex cover, clique, coloring, and dominating set problems.
As in \cite{cai20} for SNAP benchmarks we consider the graphs with at list $30000$ vertices and for Repository benchmark we choose the graphs with at least $10^5$ vertices (for more details see \cite{nr}).

\subsection*{Experimental results and implementation}

The most related work to ours is in \cite{ali2}, where their local distributed algorithm computes a total dominating set for graphs and since a total dominating set is also a dominating set so they implemented and ran their algorithm on some real data sets and compared their results with a recent centralized algorithm for minimum dominating set problem in \cite{cai20}.

In Table \ref{t1}, Table \ref{bho}, Table \ref{dim} and Table \ref{nr} we present our results and compare the results with \cite{ali2}.
The first column  is the output of Algorithm \ref{alg2} with two modifications. The first modification is that instead of line 1, we run algorithm presented in \cite{ali2} for $m=5$ iterations. And the second modification is that in line 4, instead of choosing $u_i$ and $y_i$ from $N(v_i)\bigcup v_i$, we choose them from $N(v_i)$. In this case the achieved dominating set  is also a total dominating set.  We call this modified 1 of Algorithm \ref{alg2} (Mod1).
In the second column we run Algorithm \ref{alg2} with a modification in line $1$ as follows. Instead of algorithm \ref{alg1} we run the algorithm in \cite{ali2} for $m=5$  iterations. We call this Modified 2 of Algorithm \ref{alg2} (Mod2).
In the third and forth columns the results of Algorithm \ref{alg2} and Algorithm \ref{alg1} for $m=5$ are presented. 
In sixth column the results of implementation of the algorithm in \cite{ali2} is presented.
And in the last column we present the results of  algorithm presented in \cite{cai20}.  The empty cells were not computed in \cite{cai20}.

\begin{table}[h]
\begin{footnotesize}
\caption{Experimental results for T1 benchmark.}
\begin{center}
\begin{tabular}{| lllll  ll |}
 \hline
 \textbf{Instance}&\textbf{Mod1}&\textbf{Mod2}&\textbf{Alg \ref{alg2}}&\textbf{Alg \ref{alg1}}&\textbf{\cite{ali2}}&\textbf{\cite{cai20}}\\ \hline
 
V100E100		&42	&42	&42&50	&50&34\\ \hline
V100E1000	&10&10&	10&	12&	12&8\\ \hline
V100E2000&	6&	6&	6&	6&	6&5\\ \hline
V100E250	&	28&	28&	28&	29&	29&20\\ \hline
V100E500	&	16&	16&	16&	18&	18&13\\ \hline
V100E750	&	12&	12	&12&12&	12&9\\ \hline
V150E1000&	21&	21	&21&22	&22&15\\ \hline
V150E150	&	63&	63&	63&	73&	73&50\\ \hline
V150E2000&	12&	12&	12&	13&	13&9\\ \hline
V150E250	&	46&	46&	46&	51&	51&39\\ \hline
V150E3000&	11&	11&	11&	11&	11&7\\ \hline
V150E500	&	31&	31	&31&36&	36&25\\ \hline
V150E750	&	25&	25	&25&26&26&18\\ \hline
V200E1000&	32&	32&	32&	34&	34&24\\ \hline
V200E2000&	22&	22&	22&	22&	22&15\\ \hline
V200E250	&	80&	80&	80&	87&	87&61\\ \hline
V200E3000&	13&	13&	13&	14&	14&11\\ \hline
V200E500	&	54&	54&	54&	55&	55&37\\ \hline
V200E750	&	41&	41&	41&	44&	44&30\\ \hline
V250E1000&	49&	49&	49&	52	&52&36\\ \hline
V250E2000&	30&	30	&30&31&31&22\\ \hline
V250E250	&	106&	106&	106&	122&	122&83\\ \hline
V250E3000&	24&	24&	24&	25	&25&16\\ \hline
V250E500	&	77&	77&	77&	87&	87&58\\ \hline
V250E5000&	16&	16&	16&	17	&17&11\\ \hline
V250E750	&	62&	62&	62&	64&	64&44\\ \hline
V300E1000&	62&	62&	62&	70&	70&49\\ \hline
V300E2000&	40&	40&	40&	43	&43&29\\ \hline
V300E300	&	128&	128&	128&	141	&141&100\\ \hline
V300E3000&	30&	30&	30&	31&	31&22\\ \hline
V300E500&	107&	107&	107&	115&	115&78\\ \hline
V300E5000&	22&	22&	22&	23&	23&15\\ \hline
V300E750	&	86&	86&	86&	89&	89&60\\ \hline
V500E1000&	150&	150&	150&	165&	165&115\\ \hline
V500E10000&	35&	35&	35&	35&	35&22\\ \hline
V500E2000&	104&	104&	104	&114&114&71\\ \hline
V500E500	&	214&	214&	214	&241	&241&167\\ \hline
V500E5000&	58&	58&	58&	59&	59&37\\ \hline
V800E1000&	326&	326&	326&	355&	355&267\\ \hline
V800E10000&	80&	80&	80&	85&	85&50\\ \hline
V800E2000&	222&	222&	222&	239&	239&158\\ \hline
V800E5000&	121&	121&	121&	129&	129&83\\ \hline
V1000E1000&	417&	417&	417&	476&476&334\\ \hline
V1000E10000&	110&	110	&110	&118&118&74\\ \hline
V1000E15000&	80&	80&	80&	85&	85&55\\ \hline
V1000E20000&	73&	73&	73&	76&	76&45\\ \hline
V1000E5000&	184&	184&	184&	194&	194&121\\ \hline

 \end{tabular}
\end{center}
\label{t1}
\end{footnotesize}
\end{table}

\begin{table}[h]
\begin{footnotesize}
\caption{Experimental results for BHOSLIB benchmark.}

\begin{tabular}{| lllll ll |}
 \hline
 \textbf{Instance}&\textbf{Mod1}&\textbf{Mod2}&\textbf{Alg \ref{alg2}}&\textbf{Alg \ref{alg1}}&\textbf{\cite{ali2}}&\textbf{\cite{cai20}}\\ \hline

frb40-19-1&2& 2& 2&3&3&14\\ \hline
frb40-19-2 &3&3& 4&4&3&14\\ \hline
frb40-19-3 &4&4&4& 4&4&14\\ \hline
frb40-19-4 &3&3&3&4&4&14\\ \hline
frb40-19-5 &3& 3& 3& 3&3&14\\ \hline
frb45-21-1 &3&3&3&4&4&16\\ \hline
frb45-21-2 &4&5&3&4&5&16\\ \hline
frb45-21-3 &3&3&3&3&3&16\\ \hline
frb45-21-4 &4&4&4&4&4&16\\ \hline
frb45-21-5 &3&3& 3&3&3&16\\ \hline
frb50-23-1 &3&3&3&4&4&18\\ \hline
frb50-23-2 &4&4&4&4&4&18\\ \hline
frb50-23-3 &3&3& 3&3&4&18\\ \hline

\end{tabular}

\label{bho}
\end{footnotesize}
\end{table}

In BHOSLIB the achieved results are surprisingly better than \cite{cai20}.  In dense graphs or the graphs with large maximum degree, close to n (number of vertices), the adjacent vertices to the maximum degree choose it, so the number of marked vertices is small and close to the exact solution. For example in instance frb40-19-1 of BHSLIB benchmark, the number of vertices is about 760, the maximum degree is 703, the minimum degree is 581 and the average degree is 650.

\begin{table*}[h]
\begin{footnotesize}
\caption{Experimental results for Network snap and DIMACS10 benchmark.}
\begin{center}
\begin{tabular}{| lllll ll |}
\hline
 \textbf{Instance}&\textbf{Mod1}&\textbf{Mod2}&\textbf{Alg \ref{alg2}}&\textbf{Alg \ref{alg1}}&\textbf{\cite{ali2}}&\textbf{\cite{cai20}}\\ \hline

Amazon0302(V262K E1.2M)&	46602&	43965&	42095&	45742&	49903&35593\\ \hline
Amazon0312(V400K E3.2M)&	56034&	53640&	52707&	57068&	59723&45490\\ \hline
Amazon0505(V410K E3.3M)&	58088&	55717&	54687&	59241&	61905&47310\\ \hline
Amazon0601(V403K E3.3M)&	52132&	50298&	49464&	53432&	55644&42289\\ \hline
email-EuAll(V265K E420K)&	33852&	31468&	18185&	18219&	33864&18181\\ \hline
p2p-Gnutella24(V26K E65K)&	5557	 &5515&	5476 &	5655&	  5718&5418\\ \hline
p2p-Gnutella25(V22K E54K)&	4645	&4610	&4594&	4756	& 4807&4519\\ \hline
p2p-Gnutella30(V36K E88K)&	7336&	 7281&	7263 &	7449 &	7524&7169\\ \hline
p2p-Gnutella31(V62K E147K)&	12793&	12703&	12676&	12980&	13115&12582\\ \hline
soc-sign-Slashdot081106(V77K E516K)&	14975&	14420&	14390&	14865&	15209&14312\\ \hline
soc-sign-Slashdot090216(V81K E545K)&	16118&	15484&	15446&	16010&	16418&15305\\ \hline
soc-sign-Slashdot090221(V82K E549K)&	16154&	15517&	15490&	16021&	16436&-\\ \hline
soc-Epinions1(V75K E508K)&	16557&	15840&	15789&	16255&	16760&15734\\ \hline
web-BerkStan(V685K E7.6M)&	37711&	35039&	31784&	33980&	39938&28432\\ \hline
web-Stanford(V281K E2.3M)&	18350&	16887&	15032&	16176	&19643&13199\\ \hline
wiki-Talk(V2.3M E5M)&	40135&	39324&	36969&	37219&	40191&36960\\ \hline
wiki-Vote(V7K E103K)&	1153&	 1143&	1121 &	1150 &	1177&1116\\ \hline
cit-HepPh(V34K E421K)&	3812	& 3701&	3624 &	3905&	 4074&3078\\ \hline
cit-HepTh(V27K E352K)&	3764	& 3553&	3386 &	3684&	 4025&2936\\ \hline
rgg-n-2-17-s0&	21605&	20495&	19282&	21430&	23523&43412\\ \hline
rgg-n-2-19-s0&	80038&	76081&	71875&	79557&	86742&844423\\ \hline
rgg-n-2-20-s0&	153393&	146039&	138412&	153129&	166522&84708\\ \hline
rgg-n-2-21-s0&	295851&	281372&	267307&	295447&	320642&162266\\ \hline
rgg-n-2-22-s0&	571868&	545511&	518014&	572266&	619551&312350\\ \hline
rgg-n-2-23-s0&	1107851&	1057773&	1006686&	1110615&	1199233&605278\\ \hline
coAuthorsCiteseer&	37005&	34508&	34139&	36381&	38310&22011\\ \hline
co-papers-citeseer&	34647&	32114&	31330&	34874&	37057&26082\\ \hline
kron-g500-logn16&	14120&	14118&	14117&	14171&	14174&14100\\ \hline
co-papers-dblp	&48638&	45467&	44805&	49821	&52187&43978\\ \hline

\end{tabular}
\end{center}
\label{dim}
\end{footnotesize}
\end{table*}

\begin{table*}[h]
\begin{footnotesize}
\caption{Experimental results for Network repository benchmark.}
\begin{center}
\begin{tabular}{| lllll ll |}
\hline
 \textbf{Instance}&\textbf{Mod1}&\textbf{Mod2}&\textbf{Alg \ref{alg2}}&\textbf{Alg \ref{alg1}}&\textbf{\cite{ali2}}&\textbf{\cite{cai20}}\\ \hline

soc-youtube(V496 E2M)&	99669&	92020&	91192&	96212&	102355&89732\\ \hline
soc-flickr(V514K E3M)&	104571&	99237&	98832&	102194&	106337&98062\\ \hline
ca-coauthors-dblp(V540K E15M)&	48647&	45533&	44833&	49841&	52180&35597\\ \hline
ca-dblp-2012(V317K E1M)&	50246&	47497&	47067&	49669&	51790&46138\\ \hline
ca-hollywood-2009(V1.1 E56.3)&	58060&	57072&	56972&	61096&	61626&48740\\ \hline
inf-roadNet-CA(V2M E3M)&	834653&	785263&	718224&	790165&	911273&586513\\ \hline
inf-roadNet-PA(V1M E2M)	&464398&	436853&	400628&	440939&	507130&326934\\ \hline
rt-retweet-crawl(V1M E2M)&	82927&	76039&	75825&	76916&	83368&75740\\ \hline
sc-ldoor(V952k E21M)&	77595&	75189&	70543&	73017&	79629&62411\\ \hline
sc-pwtk(V218K E6M)&	8783&	8228	&7321&	8077	&9444&4200\\ \hline
sc-shipsec1(V140K E2M)&	13908&	13638&	13361&	14405&	14926&7662\\ \hline
sc-shipsec5(V179K E2M)&	20512&	20179&	19940&	21689&	22184&10300\\ \hline
soc-FourSquare(V639K E3M)&	61324&	61324&	61324&	62053&	62053&60979\\ \hline
soc-buzznet(V101K E3M)&	138&	138	&138&	150&	150&127\\ \hline
soc-delicious(V536K E1M)&	57795&	56192&	56067&	57131&	58491&55722\\ \hline
soc-digg(V771K E6M)&	70185&	67240&	66896&	69234&	71889&66155\\ \hline
soc-flixster(V3M E8M)&	91528&	91044&	91035&	91312&	91605&91019\\ \hline
soc-lastfm(V1M E5M)&	67445	&67270&	67258&	67466&	67621&67226\\ \hline
soc-livejournal(V4M E28M)&	855807&	826813&	822403&	868615	&891958&793887\\ \hline
soc-orkut(V3M E106M)&	141426&	141267&	141208&	151742&	151881&110547\\ \hline
soc-pokec(V2M E22M)&	234696&	231289&	230622	&245806&	248740&207308\\ \hline
soc-youtube-snap(V1M E3M)&	231538&	215321&	214338&	222480&	234965&213122\\ \hline
socfb-FSU53(V28K E1M)&	2388&	2379	&2369&	2575	&2589&-\\ \hline
socfb-Indiana69(V30K E1M)&	2301&	2289	&2278&	2435	&2450&-\\ \hline
socfb-MSU24(V32K E1M)	&2837&	2806	&2797&	2996	&3020&-\\ \hline
socfb-Michigan23(V30K E1M)&	2708	&2681&	2663&	2851	&2893&-\\ \hline
socfb-Penn94(V42K E1M)	&3836	&3809&	3802&	4096	&4116&-\\ \hline
socfb-Texas80(V32K E1M)&	2787	&2770&	2750	&2984&	3010&-\\ \hline
socfb-Texas84(V36K E2M)&	2840&	2830&	2822&	3061&	3073&-\\ \hline
web-edu&	252&	249&	249&	251&	253&-\\ \hline
web-polblogs&	115&	109&	108&	113&	118&-\\ \hline
web-spam	&889&	858&	854&	901&	925&-\\ \hline
web-indochina-2004	&1513&	1496&	1491&	1504&	1517&-\\ \hline
web-webbase-2001&	1112&	1064&	1055	&1114&	1158&-\\ \hline
web-sk-2005&	31166&	30014&	29046&	30128&	32306&-\\ \hline
web-uk-2005&	1715&	1421	&1421&	1587	&1717&1421\\ \hline
web-arabic-2005&	19518&	18191&	17676&	18533&	20288&-\\ \hline
web-Stanford&	18398&	16924&	15001&	16155&	19678&-\\ \hline
web-it-2004&	34066&	33233&	33183&	34017&	34442&32997\\ \hline
web-italycnr-2000&	23832&	22827&	22665&	23304&	24372&-\\ \hline
\end{tabular}
\end{center}
\label{nr}
\end{footnotesize}
\end{table*}

The running time of  \cite{ali2} and Algorithm \ref{alg1} are the same since they have a small difference which does not affect the running time.
But as it can be seen the quality of solution in Algorithm \ref{alg1} is better than \cite{ali2}.  Because in \cite{ali2} they compute a total dominating set but we compute a dominating set. Essentially the size of MTDS can be twice of the size of MDS so this can explain why this happens.

The running time of algorithm \ref{alg2} depends on the algorithm used for computing MCV of $G''$.
In our experiment we have used a $2$-approximation factor algorithm for computing MVC. 
Theoretically the quality of solution in Algorithm \ref{alg2} is better than Algorithm \ref{alg1}. Because in graph $G''$ which is constructed from $G$ based  on Theorem \ref{main1}, the edges are added between the vertices which are marked in Algorithm \ref{alg1}, obviously the size of vertex cover of $G''$ is less than or equal the number of total vertices marked in Algorithm \ref{alg1}. On the other hand Algorithm \ref{alg1} is faster than Algorithm \ref{alg2}.

Note that the running time of the modified versions of algorithms (column 1 and column 2) is the same as Algorithm \ref{alg2}. In all of the instances Algorithm \ref{alg2} performs better than Mod1 and Mod2 except two instances.

The first modified version computes a total dominating set and we can compare the results with \cite{ali2} which also computes a total dominating set. As it can be expected Mod1 performs better than \cite{ali2}.

Note that \cite{cai20} is a recent sequential algorithm for computing dominating set and they have done many experiments and compared their results with state of the art algorithms. Their algorithm performs better than the other algorithms in most of the times. As we explain earlier, theoretically improving the approximation factor of MDS in distributed model is a challenging problem. If we compare our results with \cite{cai20} we can see that either our algorithms solution quality is better than their algorithm for example in BHOSLIB data set, or our solutions are at most two times of their solutions.
This shows that in practice the proposed algorithms have acceptable solutions in distributed model.

Note that we have implemented a centralized version of our proposed algorithms. In centralized version the nodes mark a vertex one by one, but in distributed version this is done by all nodes in one round. So, solution set in both centralized and sequential implementation is the same.
However we can modify the algorithms in sequential model to get better solutions which is not our aim in this paper and  we have focused on distributed algorithms.
The experiments were run in a system with OS: CentOS Linux release 7.7.1908 (Core), CPU: Intel E5-2683 v4 Broadwell $@$ 2.1Ghz
and Memory: 100G. 
The codes are also available in the web\footnote{https://github.com/salarim/MDS}.

We have run the algorithms for each instance just once. In the first step we assign a random number $r_i$ for each vertex. This can affect the solution in the case where two vertices $v_i$ and $v_j$ have a common neighbor $v$  and $d_i=d_j$ and their degree is maximum among $N(v)$. Here , $v$ marks one of them based on $r_i$ and $r_j$.
So, if we run the algorithm several times and choose the minimum solution, better results can be achieved.
 
\section{Remarks, importance and applications of proposed algorithms}

\subsection*{Set cover problem}
In the set cover problem we are given a set $A = \{a_1, a_2, \dots a_n\}$ of $n$ elements and $m$ subsets, $A_1, A_2, \dots, A_m$ of $A$. The goal is to choose the minimum number of subsets that cover all the elements of $A$. In \cite{ali2} they have explained how to change their algorithm to choose the subsets. Each element $a$ chooses a subset $A_j$ with maximum size such that $a_i\in A_j$. Let $x_i$'s be the number of times that $A_i$'s are chosen by the elements. Next round  each $a_i$ chooses a subset $A_j$ such that $a_i\in A_j$ with maximum $x_i+r_i$. For $m$ rounds the previous step is repeated.

Now we explain how to modify Algorithm \ref{alg2} to solve set cover problem. First we run the modified version of \cite{ali2}. Next we construct a graph $G'$ that its vertices are the subsets $A_1, A_2, \dots, A_m$. For each $a\in A$ we choose two subsets $A_i$ and $A_j$ with maximum values of $x_i$s such that $a\in A_i$ and $a\in A_j$ and add an edge between them.
Similar to the proof of Lemma \ref{1} It can be shown that a vertex cover for $G'$ is a set cover for $A$.

\subsection*{$k$-distance dominating set}

A $k$-observer $Ob$ of a network $N$ is a set of nodes in $N$ such that each message, that travels at least
$k$ hops in $N$, is handled (and so observed) by at least one node in $Ob$.
A $k$-observer $Ob$ of a network $N$ is minimum iff the number of nodes
in $Ob$ is less than or equal the number of nodes in every $k$-observer of
$N$ (See \cite{cha}). This problem is equivalent to the  $k$-distance dominating set problem.
 In this problem for each node $v$, the neighbors of $v$, is the set of nodes that their distance from $v$ is less than $k+1$. Then we apply the proposed algorithms as before.
 
Note that computing a minimum $k$-distance dominating set for a graph $G$ is equivalent to computing a minimum dominating set for $G^k$, where $G^k$ is a graph with the same vertex set as $G$ and we put an edge between two vertices in $G^k$ if the distance between them in $G$ is less that $k+1$. Since our algorithms performs well in dense graphs so if $G$ is a dense graph then as the value of  $k$ increases the graph $G^k$ will be denser and the quality of solution of our proposed algorithm will be improved.

\subsection*{Other variations and constraints}

Suppose that the network is dynamic and nodes and edges are added or deleted constantly for example some nodes are online only in particular period of time. In dynamic model, for example when a vertex $v$ with its adjacent edges are added we only need to change the marked vertices that are chosen by $N(v)$ and $N(N(V))$ in Algorithm \ref{alg1}. In Algorithm \ref{alg2}, only the corresponding edges of $N(v)$ and $N(N(v))$ are modified because the value of $x_i$'s for $N(v)$ is changed. This modification is done locally.

In some situations each vertex $v$ should be dominated by a specific set of vertices denoted by $A_v\subset N(v)$. In this case, in Algorithm \ref{alg1}, $v_i$ marks a vertex $v_j\in (N(v)\bigcup v) \bigcap A_v$ with maximum $w_j$. Similarly in Algorithm \ref{alg2}, for a vertex $v$ we choose two vertices with maximum $w_i$'s from  $(N(v)\bigcup v) \bigcap A_v$.
In practice, for example in a sensor network suppose the case where the coverage radius of each sensor is limited for example less than $d$.
And each sensor covers limited angular direction.

\subsection*{Remarks}

 In obtaining the upper bound, we can use other known algorithms for computing the minimum vertex cover and maximum independent set of graphs to achieve better bounds.

We believe that Algorithm \ref{alg2} is a powerful tool for computing good approximation factor solutions
for MDS and MTDS of graphs in distributed model. 
 In Algorithm \ref{alg2} or in its modified versions we try to use good candidates as our dominating set for constructing $G'$ and $G''$. So we use the output of Algorithm \ref{alg1} or  the algorithm of \cite{ali2}. As a future work one can use other algorithms or ideas to choose the vertices for adding edges between them in constructing $G'$ and $G''$.

Also it might be useful to construct the graphs $G'$ and $G''$ according to the topology of the network and in a more data sensitive way.

The important property of Algorithm \ref{alg2} is that $G'$ and $G''$ can be constructed in distributed model. The rest is computing the vertex cover of $G'$ and $G''$ which are well studied in the distributed model and we can use the known distributed algorithms for computing the minimum vertex cover. Beside the distributed nature of our proposed algorithms, it can easily seen that these algorithms can be applied on big data as well.
The proposed algorithms are very fast and easy to implement and they need low storage.

Note that the idea of constructing $G'$ and $G''$ from $G$ can help us to combine the algorithms to get a better solution. For example suppose that there are two algorithms for the MTDS(MDS). We run both algorithms and we use the solution set of both algorithms to get a better result. It is enough that in constructing the graph $G'$ ($G''$) for each vertex $v$, we choose two vertices from $N(v)$($N(v)\bigcup v$), one from the first solution set and the other from the second solution set. Then we add an edge between them. This way obviously both solution sets are a vertex cover for $G'$ ($G''$) and so the MVC of $G'$ ($G''$) is less than the size of both of solution sets.
 
\section{Conclusion}

In this paper, we presented some theoretical results for computing MDS and MTDS. We obtained an upper bound for the MDS and MTDS and gave a distributed randomized algorithm to achieve this bound.  Two distributed algorithms for computing a dominating set of a graph are presented. We implemented these algorithms and presented some experimental results to show the efficiency of our algorithms. Then we discussed the importance and applications of the proposed methods.

\bibliographystyle{ACM-Reference-Format}
\bibliography{sample-bibliography}


\begin{thebibliography}{46}


\ifx \showCODEN    \undefined \def \showCODEN     #1{\unskip}     \fi
\ifx \showDOI      \undefined \def \showDOI       #1{#1}\fi
\ifx \showISBNx    \undefined \def \showISBNx     #1{\unskip}     \fi
\ifx \showISBNxiii \undefined \def \showISBNxiii  #1{\unskip}     \fi
\ifx \showISSN     \undefined \def \showISSN      #1{\unskip}     \fi
\ifx \showLCCN     \undefined \def \showLCCN      #1{\unskip}     \fi
\ifx \shownote     \undefined \def \shownote      #1{#1}          \fi
\ifx \showarticletitle \undefined \def \showarticletitle #1{#1}   \fi
\ifx \showURL      \undefined \def \showURL       {\relax}        \fi
\providecommand\bibfield[2]{#2}
\providecommand\bibinfo[2]{#2}
\providecommand\natexlab[1]{#1}
\providecommand\showeprint[2][]{arXiv:#2}

\bibitem[\protect\citeauthoryear{Abu{-}Khzam, Cai, Egan, Shaw, and
  Wang}{Abu{-}Khzam et~al\mbox{.}}{2017}]%
        {abu}
\bibfield{author}{\bibinfo{person}{Faisal~N. Abu{-}Khzam},
  \bibinfo{person}{Shaowei Cai}, \bibinfo{person}{Judith Egan},
  \bibinfo{person}{Peter Shaw}, {and} \bibinfo{person}{Kai Wang}.}
  \bibinfo{year}{2017}\natexlab{}.
\newblock \showarticletitle{Turbo-Charging Dominating Set with an {FPT}
  Subroutine: Further Improvements and Experimental Analysis}. In
  \bibinfo{booktitle}{{\em Theory and Applications of Models of Computation -
  14th Annual Conference, {TAMC} 2017, Bern, Switzerland, April 20-22, 2017,
  Proceedings}}. \bibinfo{pages}{59--70}.
\newblock
\showDOI{%
\url{https://doi.org/10.1007/978-3-319-55911-7\_5}}


\bibitem[\protect\citeauthoryear{Alipour, Futuhi, and Karimi}{Alipour
  et~al\mbox{.}}{2020}]%
        {ali2}
\bibfield{author}{\bibinfo{person}{Sharareh Alipour}, \bibinfo{person}{Ehsan
  Futuhi}, {and} \bibinfo{person}{Shayan Karimi}.}
  \bibinfo{year}{2020}\natexlab{}.
\newblock \bibinfo{title}{On Distributed Algorithms for Minimum Dominating Set
  problem, from theory to application}.
\newblock   (\bibinfo{year}{2020}).
\newblock
\showeprint[arxiv]{cs.DC/2012.04883}


\bibitem[\protect\citeauthoryear{Alipour and Jafari}{Alipour and
  Jafari}{2020}]%
        {ali1}
\bibfield{author}{\bibinfo{person}{Sharareh Alipour} {and}
  \bibinfo{person}{Amir Jafari}.} \bibinfo{year}{2020}\natexlab{}.
\newblock \showarticletitle{A {LOCAL} Constant Approximation Factor Algorithm
  for Minimum Dominating Set of Certain Planar Graphs}. In
  \bibinfo{booktitle}{{\em {SPAA} '20: 32nd {ACM} Symposium on Parallelism in
  Algorithms and Architectures, Virtual Event, USA, July 15-17, 2020}}.
  \bibinfo{pages}{501--502}.
\newblock
\showDOI{%
\url{https://doi.org/10.1145/3350755.3400217}}


\bibitem[\protect\citeauthoryear{Alon and Spencer}{Alon and Spencer}{2004}]%
        {alon}
\bibfield{author}{\bibinfo{person}{Noga Alon} {and} \bibinfo{person}{Joel~H
  Spencer}.} \bibinfo{year}{2004}\natexlab{}.
\newblock \bibinfo{booktitle}{{\em The probabilistic method}}.
\newblock \bibinfo{publisher}{John Wiley \& Sons}.
\newblock


\bibitem[\protect\citeauthoryear{Amiri, Schmid, and Siebertz}{Amiri
  et~al\mbox{.}}{2019}]%
        {akh}
\bibfield{author}{\bibinfo{person}{Saeed~Akhoondian Amiri},
  \bibinfo{person}{Stefan Schmid}, {and} \bibinfo{person}{Sebastian Siebertz}.}
  \bibinfo{year}{2019}\natexlab{}.
\newblock \showarticletitle{Distributed Dominating Set Approximations beyond
  Planar Graphs}.
\newblock \bibinfo{journal}{{\em {ACM} Trans. Algorithms\/}}
  \bibinfo{volume}{15}, \bibinfo{number}{3} (\bibinfo{year}{2019}),
  \bibinfo{pages}{39:1--39:18}.
\newblock
\showDOI{%
\url{https://doi.org/10.1145/3326170}}


\bibitem[\protect\citeauthoryear{{\AA}strand, Flor{\'{e}}en, Polishchuk,
  Rybicki, Suomela, and Uitto}{{\AA}strand et~al\mbox{.}}{2009}]%
        {matt}
\bibfield{author}{\bibinfo{person}{Matti {\AA}strand}, \bibinfo{person}{Patrik
  Flor{\'{e}}en}, \bibinfo{person}{Valentin Polishchuk}, \bibinfo{person}{Joel
  Rybicki}, \bibinfo{person}{Jukka Suomela}, {and} \bibinfo{person}{Jara
  Uitto}.} \bibinfo{year}{2009}\natexlab{}.
\newblock \showarticletitle{A Local 2-Approximation Algorithm for the Vertex
  Cover Problem}. In \bibinfo{booktitle}{{\em Distributed Computing, 23rd
  International Symposium, {DISC} 2009, Elche, Spain, September 23-25, 2009.
  Proceedings}}. \bibinfo{pages}{191--205}.
\newblock
\showDOI{%
\url{https://doi.org/10.1007/978-3-642-04355-0\_21}}


\bibitem[\protect\citeauthoryear{Bouamama and Blum}{Bouamama and Blum}{2016}]%
        {blum}
\bibfield{author}{\bibinfo{person}{Salim Bouamama} {and}
  \bibinfo{person}{Christian Blum}.} \bibinfo{year}{2016}\natexlab{}.
\newblock \showarticletitle{A hybrid algorithmic model for the minimum weight
  dominating set problem}.
\newblock \bibinfo{journal}{{\em Simul. Model. Pract. Theory\/}}
  \bibinfo{volume}{64} (\bibinfo{year}{2016}), \bibinfo{pages}{57--68}.
\newblock
\showDOI{%
\url{https://doi.org/10.1016/j.simpat.2015.11.001}}


\bibitem[\protect\citeauthoryear{Cai, Hou, Wang, Luo, and Lin}{Cai
  et~al\mbox{.}}{2020}]%
        {cai20}
\bibfield{author}{\bibinfo{person}{Shaowei Cai}, \bibinfo{person}{Wenying Hou},
  \bibinfo{person}{Yiyuan Wang}, \bibinfo{person}{Chuan Luo}, {and}
  \bibinfo{person}{Qingwei Lin}.} \bibinfo{year}{2020}\natexlab{}.
\newblock \showarticletitle{Two-goal Local Search and Inference Rules for
  Minimum Dominating Set}. In \bibinfo{booktitle}{{\em Proceedings of the
  Twenty-Ninth International Joint Conference on Artificial Intelligence,
  {IJCAI} 2020}}. \bibinfo{pages}{1467--1473}.
\newblock
\showDOI{%
\url{https://doi.org/10.24963/ijcai.2020/204}}


\bibitem[\protect\citeauthoryear{Cai, Su, and Sattar}{Cai
  et~al\mbox{.}}{2011}]%
        {cai11}
\bibfield{author}{\bibinfo{person}{Shaowei Cai}, \bibinfo{person}{Kaile Su},
  {and} \bibinfo{person}{Abdul Sattar}.} \bibinfo{year}{2011}\natexlab{}.
\newblock \showarticletitle{Local search with edge weighting and configuration
  checking heuristics for minimum vertex cover}.
\newblock \bibinfo{journal}{{\em Artif. Intell.\/}} \bibinfo{volume}{175},
  \bibinfo{number}{9-10} (\bibinfo{year}{2011}), \bibinfo{pages}{1672--1696}.
\newblock
\showDOI{%
\url{https://doi.org/10.1016/j.artint.2011.03.003}}


\bibitem[\protect\citeauthoryear{Caro}{Caro}{1979}]%
        {caro}
\bibfield{author}{\bibinfo{person}{Yair Caro}.}
  \bibinfo{year}{1979}\natexlab{}.
\newblock \bibinfo{booktitle}{{\em New results on the independence number}}.
\newblock \bibinfo{type}{{T}echnical {R}eport}. \bibinfo{institution}{Technical
  Report, Tel-Aviv University}.
\newblock


\bibitem[\protect\citeauthoryear{Chakrabarty, Iyengar, Qi, and Cho}{Chakrabarty
  et~al\mbox{.}}{2002}]%
        {cha}
\bibfield{author}{\bibinfo{person}{Krishnendu Chakrabarty},
  \bibinfo{person}{S.~Sitharama Iyengar}, \bibinfo{person}{Hairong Qi}, {and}
  \bibinfo{person}{Eungchun Cho}.} \bibinfo{year}{2002}\natexlab{}.
\newblock \showarticletitle{Grid Coverage for Surveillance and Target Location
  in Distributed Sensor Networks}.
\newblock \bibinfo{journal}{{\em {IEEE} Trans. Computers\/}}
  \bibinfo{volume}{51}, \bibinfo{number}{12} (\bibinfo{year}{2002}),
  \bibinfo{pages}{1448--1453}.
\newblock
\showDOI{%
\url{https://doi.org/10.1109/TC.2002.1146711}}


\bibitem[\protect\citeauthoryear{Chalupa}{Chalupa}{2018}]%
        {chal}
\bibfield{author}{\bibinfo{person}{David Chalupa}.}
  \bibinfo{year}{2018}\natexlab{}.
\newblock \showarticletitle{An order-based algorithm for minimum dominating set
  with application in graph mining}.
\newblock \bibinfo{journal}{{\em Inf. Sci.\/}}  \bibinfo{volume}{426}
  (\bibinfo{year}{2018}), \bibinfo{pages}{101--116}.
\newblock
\showDOI{%
\url{https://doi.org/10.1016/j.ins.2017.10.033}}


\bibitem[\protect\citeauthoryear{Chaurasia and Singh}{Chaurasia and
  Singh}{2015}]%
        {sach}
\bibfield{author}{\bibinfo{person}{Sachchida~Nand Chaurasia} {and}
  \bibinfo{person}{Alok Singh}.} \bibinfo{year}{2015}\natexlab{}.
\newblock \showarticletitle{A hybrid evolutionary algorithm with guided
  mutation for minimum weight dominating set}.
\newblock \bibinfo{journal}{{\em Appl. Intell.\/}} \bibinfo{volume}{43},
  \bibinfo{number}{3} (\bibinfo{year}{2015}), \bibinfo{pages}{512--529}.
\newblock
\showDOI{%
\url{https://doi.org/10.1007/s10489-015-0654-1}}


\bibitem[\protect\citeauthoryear{Czygrinow, Hanckowiak, and
  Wawrzyniak}{Czygrinow et~al\mbox{.}}{2008}]%
        {and}
\bibfield{author}{\bibinfo{person}{Andrzej Czygrinow}, \bibinfo{person}{Michal
  Hanckowiak}, {and} \bibinfo{person}{Wojciech Wawrzyniak}.}
  \bibinfo{year}{2008}\natexlab{}.
\newblock \showarticletitle{Fast Distributed Approximations in Planar Graphs}.
  In \bibinfo{booktitle}{{\em Distributed Computing, 22nd International
  Symposium, {DISC} 2008, Arcachon, France, September 22-24, 2008.
  Proceedings}}. \bibinfo{pages}{78--92}.
\newblock
\showDOI{%
\url{https://doi.org/10.1007/978-3-540-87779-0\_6}}


\bibitem[\protect\citeauthoryear{Fan, Lai, Li, Li, Ma, Zhou, Latecki, and
  Su}{Fan et~al\mbox{.}}{2019}]%
        {fan}
\bibfield{author}{\bibinfo{person}{Yi Fan}, \bibinfo{person}{Yongxuan Lai},
  \bibinfo{person}{Chengqian Li}, \bibinfo{person}{Nan Li},
  \bibinfo{person}{Zongjie Ma}, \bibinfo{person}{Jun Zhou},
  \bibinfo{person}{Longin~Jan Latecki}, {and} \bibinfo{person}{Kaile Su}.}
  \bibinfo{year}{2019}\natexlab{}.
\newblock \showarticletitle{Efficient Local Search for Minimum Dominating Sets
  in Large Graphs}. In \bibinfo{booktitle}{{\em Database Systems for Advanced
  Applications - 24th International Conference, {DASFAA} 2019, Chiang Mai,
  Thailand, April 22-25, 2019, Proceedings, Part {II}}}.
  \bibinfo{pages}{211--228}.
\newblock
\showDOI{%
\url{https://doi.org/10.1007/978-3-030-18579-4\_13}}


\bibitem[\protect\citeauthoryear{Garey and Johnson}{Garey and Johnson}{1979}]%
        {gar}
\bibfield{author}{\bibinfo{person}{M.~R. Garey} {and} \bibinfo{person}{David~S.
  Johnson}.} \bibinfo{year}{1979}\natexlab{}.
\newblock \bibinfo{booktitle}{{\em Computers and Intractability: {A} Guide to
  the Theory of NP-Completeness}}.
\newblock \bibinfo{publisher}{W. H. Freeman}.
\newblock
\showISBNx{0-7167-1044-7}


\bibitem[\protect\citeauthoryear{Gast, Hauptmann, and Karpinski}{Gast
  et~al\mbox{.}}{2015}]%
        {gas}
\bibfield{author}{\bibinfo{person}{Mikael Gast}, \bibinfo{person}{Mathias
  Hauptmann}, {and} \bibinfo{person}{Marek Karpinski}.}
  \bibinfo{year}{2015}\natexlab{}.
\newblock \showarticletitle{Inapproximability of dominating set on power law
  graphs}.
\newblock \bibinfo{journal}{{\em Theor. Comput. Sci.\/}}  \bibinfo{volume}{562}
  (\bibinfo{year}{2015}), \bibinfo{pages}{436--452}.
\newblock
\showDOI{%
\url{https://doi.org/10.1016/j.tcs.2014.10.021}}


\bibitem[\protect\citeauthoryear{Ghaffari and Kuhn}{Ghaffari and Kuhn}{2018}]%
        {ghaf}
\bibfield{author}{\bibinfo{person}{Mohsen Ghaffari} {and}
  \bibinfo{person}{Fabian Kuhn}.} \bibinfo{year}{2018}\natexlab{}.
\newblock \showarticletitle{Derandomizing Distributed Algorithms with Small
  Messages: Spanners and Dominating Set}. In \bibinfo{booktitle}{{\em 32nd
  International Symposium on Distributed Computing, {DISC} 2018, New Orleans,
  LA, USA, October 15-19, 2018}}. \bibinfo{pages}{29:1--29:17}.
\newblock
\showDOI{%
\url{https://doi.org/10.4230/LIPIcs.DISC.2018.29}}


\bibitem[\protect\citeauthoryear{Gulati and Eirinaki}{Gulati and
  Eirinaki}{2019}]%
        {avn}
\bibfield{author}{\bibinfo{person}{Avni Gulati} {and}
  \bibinfo{person}{Magdalini Eirinaki}.} \bibinfo{year}{2019}\natexlab{}.
\newblock \showarticletitle{With a Little Help from My Friends (and Their
  Friends): Influence Neighborhoods for Social Recommendations}. In
  \bibinfo{booktitle}{{\em The World Wide Web Conference, {WWW} 2019, San
  Francisco, CA, USA, May 13-17, 2019}}. \bibinfo{pages}{2778--2784}.
\newblock
\showDOI{%
\url{https://doi.org/10.1145/3308558.3313745}}


\bibitem[\protect\citeauthoryear{Hedar and Ismail}{Hedar and Ismail}{2010}]%
        {her}
\bibfield{author}{\bibinfo{person}{Abdel{-}Rahman Hedar} {and}
  \bibinfo{person}{Rashad Ismail}.} \bibinfo{year}{2010}\natexlab{}.
\newblock \showarticletitle{Hybrid Genetic Algorithm for Minimum Dominating Set
  Problem}. In \bibinfo{booktitle}{{\em Computational Science and Its
  Applications - {ICCSA} 2010, International Conference, Fukuoka, Japan, March
  23-26, 2010, Proceedings, Part {IV}}}. \bibinfo{pages}{457--467}.
\newblock
\showDOI{%
\url{https://doi.org/10.1007/978-3-642-12189-0\_40}}


\bibitem[\protect\citeauthoryear{Hilke, Lenzen, and Suomela}{Hilke
  et~al\mbox{.}}{2014}]%
        {hike}
\bibfield{author}{\bibinfo{person}{Miikka Hilke}, \bibinfo{person}{Christoph
  Lenzen}, {and} \bibinfo{person}{Jukka Suomela}.}
  \bibinfo{year}{2014}\natexlab{}.
\newblock \showarticletitle{Brief announcement: local approximability of
  minimum dominating set on planar graphs}. In \bibinfo{booktitle}{{\em {ACM}
  Symposium on Principles of Distributed Computing, {PODC} '14, Paris, France,
  July 15-18, 2014}}. \bibinfo{pages}{344--346}.
\newblock
\showDOI{%
\url{https://doi.org/10.1145/2611462.2611504}}


\bibitem[\protect\citeauthoryear{Huang, Qin, and Xian}{Huang
  et~al\mbox{.}}{2006}]%
        {chu}
\bibfield{author}{\bibinfo{person}{Chuanhe Huang}, \bibinfo{person}{Chuan Qin},
  {and} \bibinfo{person}{Yi Xian}.} \bibinfo{year}{2006}\natexlab{}.
\newblock \showarticletitle{A distributed algorithm for computing Connected
  Dominating Set in ad hoc networks}.
\newblock \bibinfo{journal}{{\em {IJWMC}\/}} \bibinfo{volume}{1},
  \bibinfo{number}{2} (\bibinfo{year}{2006}), \bibinfo{pages}{148--155}.
\newblock
\showDOI{%
\url{https://doi.org/10.1504/IJWMC.2006.012474}}


\bibitem[\protect\citeauthoryear{Karp}{Karp}{1972}]%
        {karp}
\bibfield{author}{\bibinfo{person}{Richard~M. Karp}.}
  \bibinfo{year}{1972}\natexlab{}.
\newblock \showarticletitle{Reducibility Among Combinatorial Problems}. In
  \bibinfo{booktitle}{{\em Proceedings of a symposium on the Complexity of
  Computer Computations, held March 20-22, 1972, at the {IBM} Thomas J. Watson
  Research Center, Yorktown Heights, New York, {USA}}} {\em
  (\bibinfo{series}{The {IBM} Research Symposia Series})},
  \bibfield{editor}{\bibinfo{person}{Raymond~E. Miller} {and}
  \bibinfo{person}{James~W. Thatcher}} (Eds.). \bibinfo{publisher}{Plenum
  Press, New York}, \bibinfo{pages}{85--103}.
\newblock
\showDOI{%
\url{https://doi.org/10.1007/978-1-4684-2001-2\_9}}


\bibitem[\protect\citeauthoryear{{Karthik {C. S.}}, Laekhanukit, and
  Manurangsi}{{Karthik {C. S.}} et~al\mbox{.}}{2018}]%
        {kar}
\bibfield{author}{\bibinfo{person}{{Karthik {C. S.}}}, \bibinfo{person}{Bundit
  Laekhanukit}, {and} \bibinfo{person}{Pasin Manurangsi}.}
  \bibinfo{year}{2018}\natexlab{}.
\newblock \showarticletitle{On the parameterized complexity of approximating
  dominating set}. In \bibinfo{booktitle}{{\em Proceedings of the 50th Annual
  {ACM} {SIGACT} Symposium on Theory of Computing, {STOC} 2018, Los Angeles,
  CA, USA, June 25-29, 2018}}. \bibinfo{pages}{1283--1296}.
\newblock
\showDOI{%
\url{https://doi.org/10.1145/3188745.3188896}}


\bibitem[\protect\citeauthoryear{Kuhn and Wattenhofer}{Kuhn and
  Wattenhofer}{2005}]%
        {kuhn}
\bibfield{author}{\bibinfo{person}{Fabian Kuhn} {and} \bibinfo{person}{Roger
  Wattenhofer}.} \bibinfo{year}{2005}\natexlab{}.
\newblock \showarticletitle{Constant-time distributed dominating set
  approximation}.
\newblock \bibinfo{journal}{{\em Distributed Computing\/}}
  \bibinfo{volume}{17}, \bibinfo{number}{4} (\bibinfo{year}{2005}),
  \bibinfo{pages}{303--310}.
\newblock
\showDOI{%
\url{https://doi.org/10.1007/s00446-004-0112-5}}


\bibitem[\protect\citeauthoryear{Lenzen, Pignolet, and Wattenhofer}{Lenzen
  et~al\mbox{.}}{2013}]%
        {chris}
\bibfield{author}{\bibinfo{person}{Christoph Lenzen},
  \bibinfo{person}{Yvonne~Anne Pignolet}, {and} \bibinfo{person}{Roger
  Wattenhofer}.} \bibinfo{year}{2013}\natexlab{}.
\newblock \showarticletitle{Distributed minimum dominating set approximations
  in restricted families of graphs}.
\newblock \bibinfo{journal}{{\em Distributed Computing\/}}
  \bibinfo{volume}{26}, \bibinfo{number}{2} (\bibinfo{year}{2013}),
  \bibinfo{pages}{119--137}.
\newblock
\showDOI{%
\url{https://doi.org/10.1007/s00446-013-0186-z}}


\bibitem[\protect\citeauthoryear{Leskovec and Krevl}{Leskovec and
  Krevl}{2014}]%
        {snap}
\bibfield{author}{\bibinfo{person}{Jure Leskovec} {and} \bibinfo{person}{Andrej
  Krevl}.} \bibinfo{year}{2014}\natexlab{}.
\newblock \bibinfo{title}{{SNAP Datasets}: {Stanford} Large Network Dataset
  Collection}.
\newblock \bibinfo{howpublished}{\url{http://snap.stanford.edu/data}}.
  (\bibinfo{date}{June} \bibinfo{year}{2014}).
\newblock


\bibitem[\protect\citeauthoryear{Lin, Zhu, and Ali}{Lin et~al\mbox{.}}{2016}]%
        {gen}
\bibfield{author}{\bibinfo{person}{Geng Lin}, \bibinfo{person}{Wenxing Zhu},
  {and} \bibinfo{person}{M.~Montaz Ali}.} \bibinfo{year}{2016}\natexlab{}.
\newblock \showarticletitle{An Effective Hybrid Memetic Algorithm for the
  Minimum Weight Dominating Set Problem}.
\newblock \bibinfo{journal}{{\em {IEEE} Trans. Evol. Comput.\/}}
  \bibinfo{volume}{20}, \bibinfo{number}{6} (\bibinfo{year}{2016}),
  \bibinfo{pages}{892--907}.
\newblock
\showDOI{%
\url{https://doi.org/10.1109/TEVC.2016.2538819}}


\bibitem[\protect\citeauthoryear{Meguerdichian, Koushanfar, Potkonjak, and
  Srivastava}{Meguerdichian et~al\mbox{.}}{2001}]%
        {meg}
\bibfield{author}{\bibinfo{person}{Seapahn Meguerdichian},
  \bibinfo{person}{Farinaz Koushanfar}, \bibinfo{person}{Miodrag Potkonjak},
  {and} \bibinfo{person}{Mani~B. Srivastava}.} \bibinfo{year}{2001}\natexlab{}.
\newblock \showarticletitle{Coverage Problems in Wireless Ad-hoc Sensor
  Networks}. In \bibinfo{booktitle}{{\em Proceedings {IEEE} {INFOCOM} 2001, The
  Conference on Computer Communications, Twentieth Annual Joint Conference of
  the {IEEE} Computer and Communications Societies, Twenty years into the
  communications odyssey, Anchorage, Alaska, USA, April 22-26, 2001}}.
  \bibinfo{pages}{1380--1387}.
\newblock
\showDOI{%
\url{https://doi.org/10.1109/INFCOM.2001.916633}}


\bibitem[\protect\citeauthoryear{Nguyen, H{\`a}, Nguyen, et~al\mbox{.}}{Nguyen
  et~al\mbox{.}}{2020}]%
        {min}
\bibfield{author}{\bibinfo{person}{Minh~Hai Nguyen},
  \bibinfo{person}{Minh~Ho{\`a}ng H{\`a}}, \bibinfo{person}{Diep~N Nguyen},
  {et~al\mbox{.}}} \bibinfo{year}{2020}\natexlab{}.
\newblock \showarticletitle{Solving the k-dominating set problem on very
  large-scale networks}.
\newblock \bibinfo{journal}{{\em Computational Social Networks\/}}
  \bibinfo{volume}{7}, \bibinfo{number}{1} (\bibinfo{year}{2020}),
  \bibinfo{pages}{1--15}.
\newblock


\bibitem[\protect\citeauthoryear{Polishchuk and Suomela}{Polishchuk and
  Suomela}{2009}]%
        {val}
\bibfield{author}{\bibinfo{person}{Valentin Polishchuk} {and}
  \bibinfo{person}{Jukka Suomela}.} \bibinfo{year}{2009}\natexlab{}.
\newblock \showarticletitle{A simple local 3-approximation algorithm for vertex
  cover}.
\newblock \bibinfo{journal}{{\em Inf. Process. Lett.\/}} \bibinfo{volume}{109},
  \bibinfo{number}{12} (\bibinfo{year}{2009}), \bibinfo{pages}{642--645}.
\newblock
\showDOI{%
\url{https://doi.org/10.1016/j.ipl.2009.02.017}}


\bibitem[\protect\citeauthoryear{Potluri and Singh}{Potluri and Singh}{2011}]%
        {pot11}
\bibfield{author}{\bibinfo{person}{Anupama Potluri} {and} \bibinfo{person}{Alok
  Singh}.} \bibinfo{year}{2011}\natexlab{}.
\newblock \showarticletitle{Two Hybrid Meta-heuristic Approaches for Minimum
  Dominating Set Problem}. In \bibinfo{booktitle}{{\em Swarm, Evolutionary, and
  Memetic Computing - Second International Conference, {SEMCCO} 2011,
  Visakhapatnam, Andhra Pradesh, India, December 19-21, 2011, Proceedings, Part
  {II}}}. \bibinfo{pages}{97--104}.
\newblock
\showDOI{%
\url{https://doi.org/10.1007/978-3-642-27242-4\_12}}


\bibitem[\protect\citeauthoryear{Potluri and Singh}{Potluri and Singh}{2013}]%
        {pot13}
\bibfield{author}{\bibinfo{person}{Anupama Potluri} {and} \bibinfo{person}{Alok
  Singh}.} \bibinfo{year}{2013}\natexlab{}.
\newblock \showarticletitle{Hybrid metaheuristic algorithms for minimum weight
  dominating set}.
\newblock \bibinfo{journal}{{\em Appl. Soft Comput.\/}} \bibinfo{volume}{13},
  \bibinfo{number}{1} (\bibinfo{year}{2013}), \bibinfo{pages}{76--88}.
\newblock
\showDOI{%
\url{https://doi.org/10.1016/j.asoc.2012.07.009}}


\bibitem[\protect\citeauthoryear{Raz and Safra}{Raz and Safra}{1997}]%
        {raz}
\bibfield{author}{\bibinfo{person}{Ran Raz} {and} \bibinfo{person}{Shmuel
  Safra}.} \bibinfo{year}{1997}\natexlab{}.
\newblock \showarticletitle{A Sub-Constant Error-Probability Low-Degree Test,
  and a Sub-Constant Error-Probability {PCP} Characterization of {NP}}. In
  \bibinfo{booktitle}{{\em Proceedings of the Twenty-Ninth Annual {ACM}
  Symposium on the Theory of Computing, El Paso, Texas, USA, May 4-6, 1997}}.
  \bibinfo{pages}{475--484}.
\newblock
\showDOI{%
\url{https://doi.org/10.1145/258533.258641}}


\bibitem[\protect\citeauthoryear{ROMANIA}{ROMANIA}{2010}]%
        {rom}
\bibfield{author}{\bibinfo{person}{QATAR~SERBIA ROMANIA}.}
  \bibinfo{year}{2010}\natexlab{}.
\newblock \showarticletitle{Ant colony optimization applied to minimum weight
  dominating set problem}. In \bibinfo{booktitle}{{\em Proceedings of the 12th
  WSEAS International Conference on Automatic Control, Modelling \& Simulation,
  Catania, Italy}}. \bibinfo{pages}{29--31}.
\newblock


\bibitem[\protect\citeauthoryear{Rossi and Ahmed}{Rossi and Ahmed}{2015}]%
        {nr}
\bibfield{author}{\bibinfo{person}{Ryan~A. Rossi} {and}
  \bibinfo{person}{Nesreen~K. Ahmed}.} \bibinfo{year}{2015}\natexlab{}.
\newblock \showarticletitle{The Network Data Repository with Interactive Graph
  Analytics and Visualization}. In \bibinfo{booktitle}{{\em AAAI}}.
\newblock
\showURL{%
\url{http://networkrepository.com}}


\bibitem[\protect\citeauthoryear{Ruan, Du, Jia, Wu, Li, and Ko}{Ruan
  et~al\mbox{.}}{2004}]%
        {ruan}
\bibfield{author}{\bibinfo{person}{Lu Ruan}, \bibinfo{person}{Hongwei Du},
  \bibinfo{person}{Xiaohua Jia}, \bibinfo{person}{Weili Wu},
  \bibinfo{person}{Yingshu Li}, {and} \bibinfo{person}{Ker{-}I Ko}.}
  \bibinfo{year}{2004}\natexlab{}.
\newblock \showarticletitle{A greedy approximation for minimum connected
  dominating sets}.
\newblock \bibinfo{journal}{{\em Theor. Comput. Sci.\/}} \bibinfo{volume}{329},
  \bibinfo{number}{1-3} (\bibinfo{year}{2004}), \bibinfo{pages}{325--330}.
\newblock
\showDOI{%
\url{https://doi.org/10.1016/j.tcs.2004.08.013}}


\bibitem[\protect\citeauthoryear{Sanchis}{Sanchis}{2002}]%
        {san}
\bibfield{author}{\bibinfo{person}{Laura~A. Sanchis}.}
  \bibinfo{year}{2002}\natexlab{}.
\newblock \showarticletitle{Experimental Analysis of Heuristic Algorithms for
  the Dominating Set Problem}.
\newblock \bibinfo{journal}{{\em Algorithmica\/}} \bibinfo{volume}{33},
  \bibinfo{number}{1} (\bibinfo{year}{2002}), \bibinfo{pages}{3--18}.
\newblock
\showDOI{%
\url{https://doi.org/10.1007/s00453-001-0101-z}}


\bibitem[\protect\citeauthoryear{Sultanik, Shokoufandeh, and Regli}{Sultanik
  et~al\mbox{.}}{2010}]%
        {sol}
\bibfield{author}{\bibinfo{person}{Evan Sultanik}, \bibinfo{person}{Ali
  Shokoufandeh}, {and} \bibinfo{person}{William~C. Regli}.}
  \bibinfo{year}{2010}\natexlab{}.
\newblock \showarticletitle{Dominating sets of agents in visibility graphs:
  distributed algorithms for art gallery problems}. In \bibinfo{booktitle}{{\em
  9th International Conference on Autonomous Agents and Multiagent Systems
  {(AAMAS} 2010), Toronto, Canada, May 10-14, 2010, Volume 1-3}}.
  \bibinfo{pages}{797--804}.
\newblock
\showURL{%
\url{https://dl.acm.org/citation.cfm?id=1838312}}


\bibitem[\protect\citeauthoryear{Suomela}{Suomela}{2013}]%
        {suo}
\bibfield{author}{\bibinfo{person}{Jukka Suomela}.}
  \bibinfo{year}{2013}\natexlab{}.
\newblock \showarticletitle{Survey of local algorithms}.
\newblock \bibinfo{journal}{{\em {ACM} Comput. Surv.\/}} \bibinfo{volume}{45},
  \bibinfo{number}{2} (\bibinfo{year}{2013}), \bibinfo{pages}{24:1--24:40}.
\newblock
\showDOI{%
\url{https://doi.org/10.1145/2431211.2431223}}


\bibitem[\protect\citeauthoryear{van Rooij and Bodlaender}{van Rooij and
  Bodlaender}{2011}]%
        {van}
\bibfield{author}{\bibinfo{person}{Johan M.~M. van Rooij} {and}
  \bibinfo{person}{Hans~L. Bodlaender}.} \bibinfo{year}{2011}\natexlab{}.
\newblock \showarticletitle{Exact algorithms for dominating set}.
\newblock \bibinfo{journal}{{\em Discret. Appl. Math.\/}}
  \bibinfo{volume}{159}, \bibinfo{number}{17} (\bibinfo{year}{2011}),
  \bibinfo{pages}{2147--2164}.
\newblock
\showDOI{%
\url{https://doi.org/10.1016/j.dam.2011.07.001}}


\bibitem[\protect\citeauthoryear{Wang, Xu, Liu, and Liang}{Wang
  et~al\mbox{.}}{2013}]%
        {wan}
\bibfield{author}{\bibinfo{person}{Bang Wang}, \bibinfo{person}{Han Xu},
  \bibinfo{person}{Wenyu Liu}, {and} \bibinfo{person}{Hui Liang}.}
  \bibinfo{year}{2013}\natexlab{}.
\newblock \showarticletitle{A Novel Node Placement for Long Belt Coverage in
  Wireless Networks}.
\newblock \bibinfo{journal}{{\em {IEEE} Trans. Computers\/}}
  \bibinfo{volume}{62}, \bibinfo{number}{12} (\bibinfo{year}{2013}),
  \bibinfo{pages}{2341--2353}.
\newblock
\showDOI{%
\url{https://doi.org/10.1109/TC.2012.145}}


\bibitem[\protect\citeauthoryear{Wang, Cai, Chen, and Yin}{Wang
  et~al\mbox{.}}{2018}]%
        {wang18}
\bibfield{author}{\bibinfo{person}{Yiyuan Wang}, \bibinfo{person}{Shaowei Cai},
  \bibinfo{person}{Jiejiang Chen}, {and} \bibinfo{person}{Minghao Yin}.}
  \bibinfo{year}{2018}\natexlab{}.
\newblock \showarticletitle{A Fast Local Search Algorithm for Minimum Weight
  Dominating Set Problem on Massive Graphs}. In \bibinfo{booktitle}{{\em
  Proceedings of the Twenty-Seventh International Joint Conference on
  Artificial Intelligence, {IJCAI} 2018, July 13-19, 2018, Stockholm, Sweden}}.
  \bibinfo{pages}{1514--1522}.
\newblock
\showDOI{%
\url{https://doi.org/10.24963/ijcai.2018/210}}


\bibitem[\protect\citeauthoryear{Wang, Cai, and Yin}{Wang
  et~al\mbox{.}}{2017}]%
        {w}
\bibfield{author}{\bibinfo{person}{Yiyuan Wang}, \bibinfo{person}{Shaowei Cai},
  {and} \bibinfo{person}{Minghao Yin}.} \bibinfo{year}{2017}\natexlab{}.
\newblock \showarticletitle{Local Search for Minimum Weight Dominating Set with
  Two-Level Configuration Checking and Frequency Based Scoring Function}.
\newblock \bibinfo{journal}{{\em J. Artif. Intell. Res.\/}}
  \bibinfo{volume}{58} (\bibinfo{year}{2017}), \bibinfo{pages}{267--295}.
\newblock
\showDOI{%
\url{https://doi.org/10.1613/jair.5205}}


\bibitem[\protect\citeauthoryear{Wawrzyniak}{Wawrzyniak}{2013}]%
        {waw}
\bibfield{author}{\bibinfo{person}{Wojciech Wawrzyniak}.}
  \bibinfo{year}{2013}\natexlab{}.
\newblock \showarticletitle{Brief announcement: a local approximation algorithm
  for {MDS} problem in anonymous planar networks}. In \bibinfo{booktitle}{{\em
  {ACM} Symposium on Principles of Distributed Computing, {PODC} '13, Montreal,
  QC, Canada, July 22-24, 2013}}. \bibinfo{pages}{406--408}.
\newblock
\showDOI{%
\url{https://doi.org/10.1145/2484239.2484281}}


\bibitem[\protect\citeauthoryear{Wei}{Wei}{1981}]%
        {wei}
\bibfield{author}{\bibinfo{person}{VK Wei}.} \bibinfo{year}{1981}\natexlab{}.
\newblock \bibinfo{booktitle}{{\em A lower bound on the stability number of a
  simple graph. Bell Laboratories Tech}}.
\newblock \bibinfo{type}{{T}echnical {R}eport}.
  \bibinfo{institution}{Memorandum 81-11217-9, Murray Hill, New Jersey}.
\newblock


\end{thebibliography}

\end{document}